\documentclass[a4paper,11pt]{article}

\usepackage{amssymb,amsfonts,amsthm,amsmath,dsfont, graphicx}
\usepackage[cp1251]{inputenc}
\usepackage[english]{babel}
\usepackage{mathrsfs}
\usepackage{mathtext}
\usepackage{verbatim}
\usepackage{caption}
\usepackage{subcaption}
\usepackage{listings}
\usepackage{color, multirow}
\usepackage{booktabs}
\usepackage{authblk}

\usepackage[authoryear]{natbib}
\bibliographystyle{apa}

\newtheorem{lemma}{Lemma}
\newtheorem{Th}{Theorem}

\newtheorem{remark}{Remark}
\newtheorem{Cor}{Corollary}

\newtheorem*{ass}{Assumption}

\usepackage{color}

\usepackage{ulem}

\newcommand{\T}{^\prime}
\newcommand{\VGMV}{V_{GMV}}
\newcommand{\RGMV}{R_{GMV}}
\newcommand{\1}{\mathbf{1}}
\newcommand{\w}{\boldsymbol{\omega}}
\newcommand{\bmu}{\boldsymbol{\mu}}
\newcommand{\bSigma}{\boldsymbol{\Sigma}}
\newcommand{\bR}{\boldsymbol{R}}
\newcommand{\D}{\mathcal{D}}
\newcommand{\inv}{^{-1}}

\begin{document}

\title{
{Mean-Variance Efficiency of Optimal Power and Logarithmic Utility Portfolios}
}

\date{}
\author[a]{{\small Taras Bodnar}}
\author[b]{{\small Dmytro Ivasiuk}}
\author[c, \footnote{Corresponding Author: Nestor Parolya. E-Mail: n.parolya@tudelft.nl.}]{{\small Nestor Parolya}}
\author[b]{{\small Wolfgang Schmid}}

\affil[a]{{\footnotesize Department of Mathematics, Stockholm University, Stockholm, Sweden}}
\affil[b]{{\footnotesize Department of Statistics, European University Viadrina, Frankfurt(Oder), Germany}}
\affil[c]{{\footnotesize Delft Institute of Applied Mathematics, Delft University of Technology, Netherlands}}
\date{}
\maketitle

\begin{abstract}		
We derive new results related to the portfolio choice problem for power and logarithmic utilities. Assuming that the portfolio returns follow an approximate log-normal distribution,  the closed-form expressions of the optimal portfolio weights are obtained for both utility functions. Moreover, we prove that both optimal portfolios belong to the set of mean-variance feasible portfolios and establish necessary and sufficient conditions such that they are mean-variance efficient. Furthermore, an application to the stock market is presented and the behavior of the optimal portfolio is discussed for different values of the relative risk aversion coefficient. It turns out that the assumption of log-normality does not seem to be a strong restriction.
\end{abstract}

\textbf{Keywords:} optimal portfolio selection; power utility; log-normal distribution; mean-variance analysis; logarithmic utility.

\section{Introduction}

The theory of optimal portfolio choice started with the pioneering contribution of \cite{markowitz1952portfolio}. Markowitz used the variance as a measure of the risk of a  portfolio return. He recommended to choose the portfolio weights in such a way that the portfolio variance is minimal for a given  level of the expected portfolio return. All of these so-called efficient portfolios lie in the efficient frontier which is a parabola in the mean-variance space.

In the meantime many further proposals for a portfolio selection have been made.  A widely made approach is based on the maximization of an investor's utility function,  where the investor chooses a portfolio for which its utility has reaches a maximum possible value \citep{pennacchi2008theory, elton2009modern}. The mean-variance approach of \cite{merton1974fallacy} turns out to be fully consistent with the expected utility maximization \citep[see,][]{dybvig1982mean}. This result is valid without any distributional assumptions imposed on the returns.  Similarly, a quadratic utility provides  a closed-form solution under very general conditions \citep{bodnar2015closed}. However, there are many other ways to choose the utility function like, e.g., the power and the exponential utility function. In these cases no closed-from solutions can be derived without information on the distribution of the return process \citep{bodnar2015exact}.

The focus of this paper lies on the power and on the logarithmic utility functions. If $W$ denotes the wealth of the investor, then the power utility is given by $U\left(W\right)=\frac{1}{1-\gamma}W^{1-\gamma},\ \gamma>0, \gamma \neq 1$.
The logarithmic utility $U\left(W\right)=\log W$ is obtained as a limit of the power utility if $\gamma$ converges to one. The quantity $\gamma$ is equal to the relative risk aversion which is constant for the power utility (CRRA). This means that an investor at any wealth level will hold the same amount of money on the risky assets \citep{brandt2009portfolio,campbell2002strategic}. This is a very important property and it makes this utility function quite attractive from economical point of view.

In order to determine the optimal portfolio for the power utility several numerical procedures have been proposed in literature \citet{grauer1987gains,cover1991universal}.
The first-order conditions for the maximum of the expected utility are derived and a Taylor series expansion is used to transform the expected utility into a linear combination of moments \citep{brandt2005simulation}. However, no analytical solution is available in literature and mostly a numerical approach is applied \citep{brandt2009portfolio}. Yet another possibility is to make use of distributional assumptions on the returns predictability \citep{barberis2000investing} and to combine it with a numerical procedure \citep{brandt2006dynamic, lynchtan2010}.

The use of the log-normal distribution for modeling asset or portfolio returns has become popular many years ago. In \citet{elton1974portfolio} an efficient set theorem is derived assuming log-normally distributed portfolio returns. They present an algorithm for determining the efficient portfolio. In the paper on multi-period mean variance analysis \citet{hakansson1971multi} describes a limiting approach for the optimal portfolio choice. \citet{merton1974fallacy} discuss the log-normal approach in detail and conclude that it should be very carefully applied.

In this paper we  consider the problem of finding the optimal  portfolios for a power and logarithmic utilities under the assumption that the portfolio gross returns are approximately log-normally distributed. The analytical expressions of the optimal portfolio weights are obtained for both utility functions. Moreover, we state the conditions under which the optimal portfolio in the sense of maximizing the expected power utility and the optimal portfolio that maximizes the expected logarithmic utility are both mean-variance efficient.

The log-normal distribution used  as a model for portfolio returns in this paper has several nice properties.  If the portfolio gross return follows a log-normal distribution, then it is possible to determine its higher-order moments in a straightforward way. Moreover, if the ratio of mean and standard deviation is large (analog of Sharpe ratio), then the log-normal distribution is close to the normal one (analytically shown in the paper). This is for instance the case when the portfolio mean is bounded and the portfolio variance is small.

In applications, the asset returns $r_j$ vary in most cases in a range up to $\pm 15\%$. Thus the corresponding gross returns $R_j = 1 + r_j$ take values close to $1$. Consequently the overall portfolio mean is close to one while the portfolio variance is usually small. This means that in such a case it behaves like a normal distribution. Moreover, assuming a log-normally distributed  portfolio return, its continuously compounded rate of return  appears to be normally distributed and vice versa. It is also notable that the normality assumption is often used in a literature, i.e. to model the asset return path, VaR process with normally distributed error terms is taken (\citet{bodnar2015closed,bodnar2015exact,brandt2005simulation,brandt2006dynamic,brandt2009portfolio}).

The rest of the paper is structured as follows. In Sections 2 and 3 the optimization problem is described and the assumptions used in the paper are presented and discussed. Our main findings are given in Section 4. In Theorem 1 we derive an explicit formula for the weights of the optimal portfolio that maximizes the expected power utility and discuss its properties.  In particular, we derived the conditions under which this portfolio is mean-variance efficient in Theorem 2 and show that it converges to the portfolio with the maximum Sharpe ratio when the relative risk aversion converges to infinity. In Section 4.2, the analytical results for the optimal portfolio in the sense of maximizing the expected logarithmic utility are present: the closed-form expression of its weights is provided in Theorem 3, while its mean-variance efficiency is proved in Theorem 4. In Section 5 the results of an empirical study are shown. Here we study portfolios based on stocks contained in the German stock market index. We check whether the log-normality assumption on the wealth is fulfilled and whether the conditions imposed in Theorem 1 are fulfilled. It turns out that these assumptions are reasonable and that they are mostly satisfied for the  portfolios in our study. Moreover, we compare the portfolio with optimal weights with other portfolio strategies and show the dominance of the new approach.

\section{Motivation of the log-normal approximation}
	
Let us define two random variables $X\sim N(\mu,\sigma^2)$ and $Y\sim\ln N(\ln\mu,\frac{\sigma^2}{\mu^2})$.
It is easy to see that the first two moments of the discussed distributions conside if $\sigma\slash\mu$ is small, i.e.,  $E(Y)=e^{\ln\mu+\frac{\sigma^2}{2\mu^2}}\approx\mu$ and $Var(Y)=e^{2\ln\mu+\frac{\sigma^2}{\mu^2}}(e^{\frac{\sigma^2}{\mu^2}}-1)\approx\sigma^2$ as $\sigma\slash\mu\rightarrow0$. Indeed, next we show that if $\sigma\slash\mu\rightarrow0$ and $\mu>0$ then the distribution of $Y$ approaches to the distribution of $X$.

In order to find the accuracy of the approximation we consider the difference of both distribution functions, namely
\begin{equation}\label{psi}
\begin{split}
\Psi(x)&=\Phi\left(\frac{x-\mu}{\sigma}\right)-\Phi\left(\frac{\ln x-\ln\mu}{\sigma\slash\mu}\right)\\
\end{split}.
\end{equation}

\begin{lemma}[Log-normal approximation]\label{lem0}
  Let $\mu>0$ then for $\Psi(x)$ defined in \eqref{psi} holds
  \begin{equation}
  \sup_{x\in \mathbb{R}}|\Psi(x)|=O\left(\frac{\sigma}{\mu} \right)~~\text{as ~$\sigma/\mu\to0$.}    
  \end{equation}
\end{lemma}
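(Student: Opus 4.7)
The plan is to set $\epsilon = \sigma/\mu$ and, for $x > 0$, reparametrize by $t = (x-\mu)/\mu \in (-1, \infty)$, so that $\Psi(x) = \Phi(t/\epsilon) - \Phi(\epsilon^{-1}\ln(1+t))$. The desired $O(\epsilon)$ bound must arise from two complementary effects: the quadratic cancellation $t - \ln(1+t) = t^2/2 + O(t^3)$ near $t = 0$, and the sub-Gaussian decay of $\phi$ that kills the $\epsilon^{-1}$ factor for $|t|$ bounded away from $0$. I would therefore split the $x$-axis into a "near" region $|t| \le \delta$ and a "far" region $|t| > \delta$ for some fixed $\delta \in (0, 1/2)$, and handle $x \le 0$ (where the log-normal CDF vanishes) separately.

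On the near region $|t| \le \delta$, the mean value theorem gives $|\Psi(x)| = \phi(\xi)\,\epsilon^{-1}|t - \ln(1+t)|$ for some $\xi$ between $\epsilon^{-1}\ln(1+t)$ and $t/\epsilon$. Choosing $\delta < 1/2$ ensures that $\ln(1+t)$ and $t$ have the same sign with $|\ln(1+t)| \ge c|t|$, so $|\xi| \ge c|t|/\epsilon$; Taylor's theorem yields $|t - \ln(1+t)| \le C t^2$. Therefore $|\Psi(x)| \le C\,\epsilon\,(|t|/\epsilon)^2\,\phi(c|t|/\epsilon)$, and substituting $u = c|t|/\epsilon$ gives $|\Psi(x)| \le C' \epsilon \cdot u^2 \phi(u)$. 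Since $u \mapsto u^2 \phi(u)$ is bounded on $[0,\infty)$, this delivers the required $O(\epsilon)$ uniformly.

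On the far region $|t| > \delta$, both $|t|/\epsilon$ and $|\ln(1+t)|/\epsilon$ are of order $1/\epsilon$, so the Mills-ratio bound $1 - \Phi(z) \le \phi(z)/z$ for $z > 0$ (and its reflection) shows that each of $\Phi(t/\epsilon)$ and $\Phi(\epsilon^{-1}\ln(1+t))$ is within $O(\exp(-c'/\epsilon^2))$ of $0$ or $1$; their difference is thus $o(\epsilon)$. For $x \le 0$, only the first term contributes and $|\Psi(x)| \le \Phi(-1/\epsilon)$, again exponentially small. Combining the three regions and taking the supremum over $x$ yields the stated bound.

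The main obstacle is that the naive first-order estimate $\phi(\xi)/\epsilon$ looks singular: neither the $\epsilon^{-1}$ factor from rescaling nor the uniform Lipschitz constant of $\Phi$ is enough in isolation. The crucial point is to recognize that the Taylor residue $t^2$, rewritten in the natural scale $u = |t|/\epsilon$, contributes $\epsilon^2 u^2$, and this $u^2$ is absorbed by the Gaussian density $\phi(u)$ into a bounded envelope; the supremum of $u^2 \phi(u)$ is precisely what converts a superficially divergent estimate into the advertised $O(\sigma/\mu)$.
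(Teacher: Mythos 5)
Your argument is correct, and it reaches the bound by a genuinely different route than the paper. The paper's proof is extremum-based: it differentiates $\Psi$, shows the nonzero critical points satisfy $(e^y-1)^2=y^2+2\tfrac{\sigma^2}{\mu^2}y$ with $y=\ln(x/\mu)$, localizes those roots to intervals of width $O(\sigma/\mu)$ around $y=0$, and then bounds $|\Psi|$ there by the crude estimate $|\Psi(x)|\le \tfrac{1}{\sqrt{2\pi}}\,\frac{x/\mu-1-\ln(x/\mu)}{\sigma/\mu}$ (i.e.\ bounding the Gaussian density by $1/\sqrt{2\pi}$); the $O(\sigma/\mu)$ rate then comes from the quadratic smallness of $t-\ln(1+t)$ evaluated only at points with $t=O(\sigma/\mu)$. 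You instead never locate the extrema: you split into a near region $|t|\le\delta$ with $\delta$ fixed, where the mean value theorem combined with $|t-\ln(1+t)|\le Ct^2$ and the decay of $\phi$ at the intermediate point yields the uniform envelope $C'\epsilon\, u^2\phi(u)$, and a far region handled by Gaussian tail bounds (plus the trivial case $x\le 0$, which the paper does not explicitly address). Both proofs exploit the same cancellation $t-\ln(1+t)=O(t^2)$, but they control the region away from $x=\mu$ by different mechanisms: the paper by showing the supremum cannot be attained there (an argument that leans on an informal root-counting step, ``plotting is helpful in this case''), you by showing $\Psi$ is exponentially small there. Your version is self-contained and arguably more rigorous on that point; the paper's version has the side benefit of producing explicit intervals for the maximizers. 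One small presentational caveat: when invoking $|\xi|\ge c|t|/\epsilon$ you should note that $t$ and $\ln(1+t)$ lie on the same side of the origin so that the intermediate point inherits the lower bound, which you do state; with that in place the argument is complete.
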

Thus, difference between normal and log-normal distribution functions approaches zero quite fast as $\sigma\slash\mu\rightarrow 0$. So, the smaller $\sigma\slash\mu$ the better approximation we receive.

Taking into account that the maximization problem deals with a portfolio gross return, it is well-known that in practice asset returns are close to zero, so one expects to have a portfolio gross return fluctuating around one. Consequently, considering the asset returns to be normally distributed with a mean nearby zero and a small variance we obtain normally distributed gross portfolio returns with a mean around one and a small variance. Meanwhile, the power utility is defined on a nonnegative domain, thus, it is naturally to assume that the expected portfolio gross return is positive and portfolio standard deviation is small. That is why, as it was shown, log-normal distribution can be a good proxy for the normally distributed portfolio returns.
	
As an illustrative example Figure \ref{fig: normal and lognormal density} describes an identical behavior of density functions of normal and log-normal distributions for several combinations of $\mu$ and $\sigma$.   

\begin{figure}
	\includegraphics[width=0.5\textwidth]{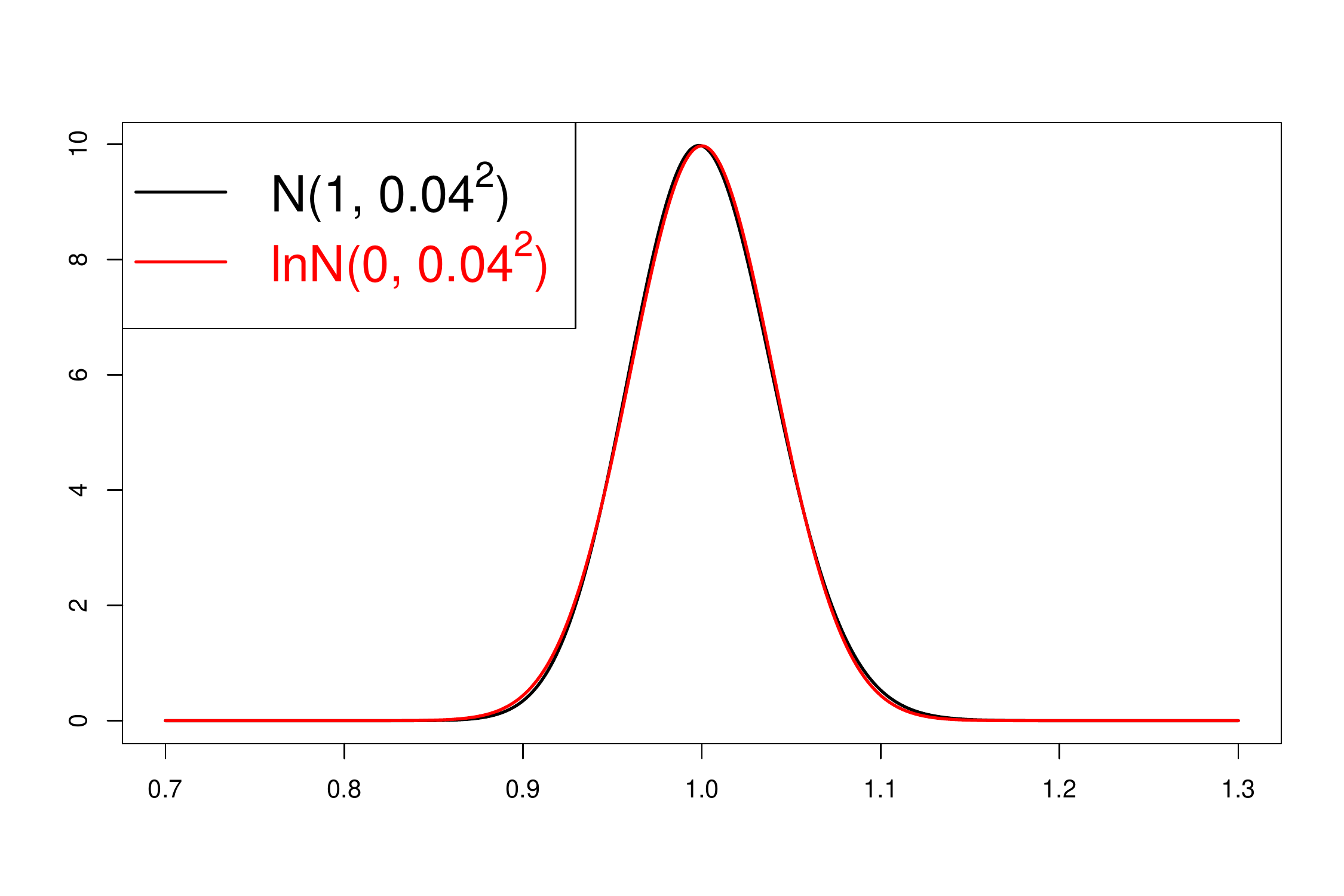} \includegraphics[width=0.5\textwidth]{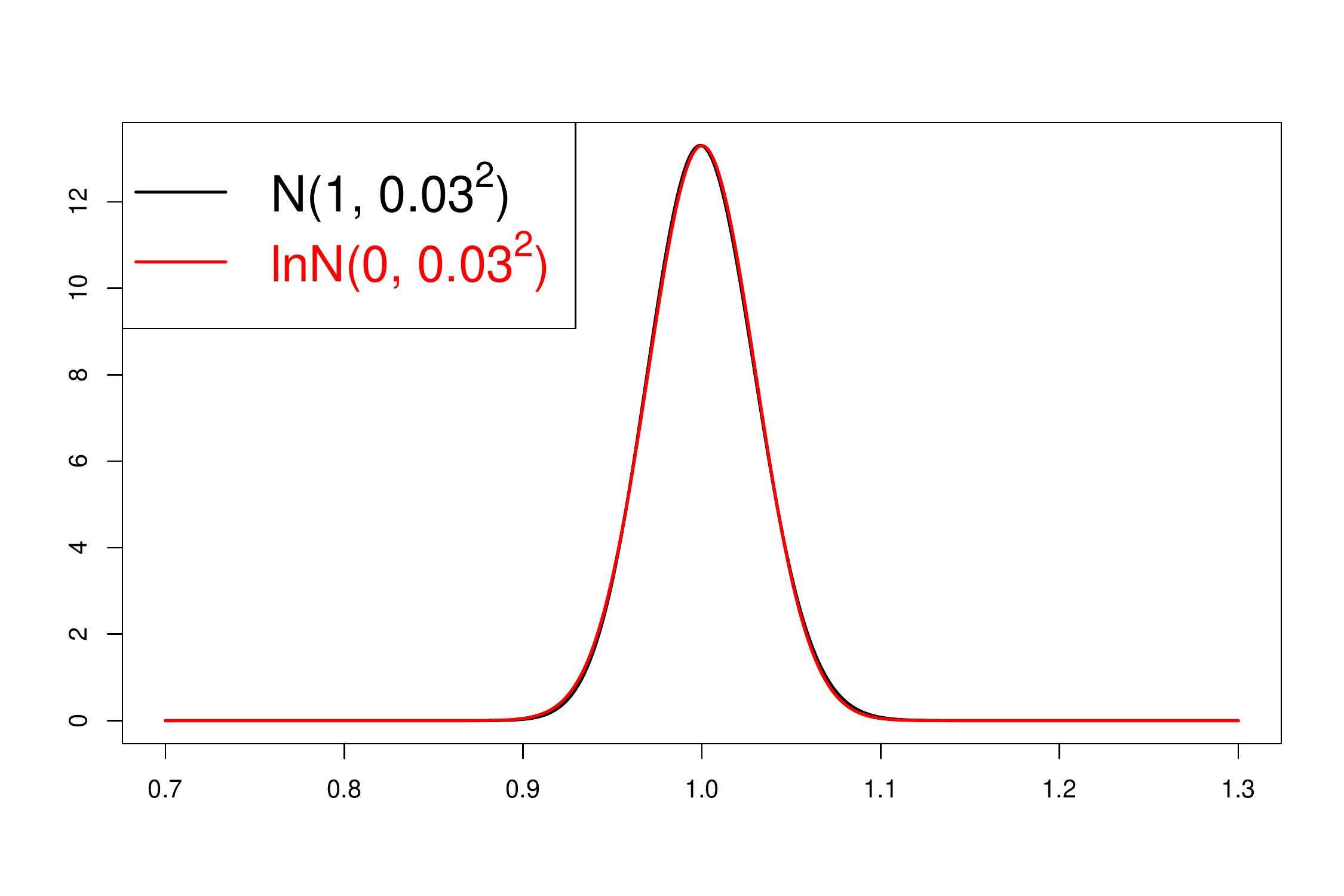}
	\caption{Similarity of normal and log-normal densities.}
	\label{fig: normal and lognormal density}
\end{figure}

\section{Portfolio selection using a utility function}

In this section we give some notations and basic assumptions used in the paper and discuss the portfolio selection problem for a power utility function.

Suppose that a portfolio  consists of $k$ risky assets.  Let $r_i$ denote the relative price change of the $i$th asset within one period  and let $\boldsymbol{r} = (r_1,\dots,r_k)\T $  {be the vector of returns}. The quantity $\omega_i$ stands for the relative amount of money invested into the $i$th asset. Then it holds for the vector of the portfolio weights $\w=\left({\omega}_{1},\omega_{2},\dots,{\omega}_{k}\right)\T $ that $\w\T \1=1$, where $\1$ is a vector of ones. The portfolio return at the end of the period is equal to $\w\T  \boldsymbol{r}$. If $W_0$ denotes the wealth of the investor at the beginning of the period, then   {the} wealth at the end of the period is given by
\begin{equation*}
W=W_{0}\left(\w\T \boldsymbol{r}+1\right)=W_0\w\T \bR,
\end{equation*}
where $\bR=\boldsymbol{r}+\1$.   The mean vector of $\bR$ and its covariance matrix are denoted by $E(\bR)=\boldsymbol{\bmu}$ and $Var(\bR)=\boldsymbol{\bSigma}$ where $\boldsymbol{\bSigma}$ is assumed to be positive definite.

Many  procedures have been proposed in literature how to construct an optimal portfolio, i.e., how to choose the optimal portfolio weights. \cite{markowitz1952portfolio} developed the mean-variance principle for portfolio selection. The idea is to choose the portfolio weights such that the portfolio variance is minimal for a given value of the portfolio return. Another attempt is based on utility functions. It is described in detail by, e.g., \citet{dybvig1982mean}. The optimal weights are obtained by maximizing the expected utility $U\left(\cdot\right)$ of the wealth $W$ at the end of the investment period and it is given by
\begin{equation}\label{eq:investor's problem}
 \w^*=\underset{{\{ \w:\ \w\T \1=1 \}}}{\arg \max} E\left[U\left(W\right)\right].
\end{equation}
There are many possibilities for the choice of a utility function  $U(\cdot)$ \cite[see,][]{pennacchi2008theory}. In the case of the quadratic utility the solution of \eqref{eq:investor's problem} is mean-variance efficient, i.e., it coincides with Markowitz's optimization problem under some conditions \cite[see,][]{bodnar2013equivalence}.  Other utility functions considered in the literature are the HARA, the CARA, and  {the} SAHARA utilities \cite[see, e.g.,][]{pennacchi2008theory,chen2011modeling}.

In this paper we   {deal with the optimal portfolio choice problem under} the power utility function  {expressed as}
\begin{equation}\label{eq: power utility}
U(W)=\frac{W^{1-\gamma}}{1-\gamma},\quad\gamma>0, \gamma \neq 1 .
\end{equation}
It is the only utility function with constant relative risk aversion \cite[cf.][]{pennacchi2008theory}  equal to $\gamma$.   {The} logarithmic utility is a limiting case of the power utility   {when} $\gamma$ converges to $1$  {and it is given by}
\begin{equation}\label{eq: log utility}
U(W)=\log W.
\end{equation}
 Both the power utility function and the logarithmic utility function are widely applied in practice.

In order to calculate the expected utility it is necessary to have a statement on the distribution of the portfolio return. There have been many proposals in literature how to model the return of stocks. \cite{bachelier1900theorie} proposed to use the normal distribution for asset returns while \cite{fama1965portfolio}  and \cite{mittnik1993modeling}  {suggested} the use of stable distributions. An overview on various proposals can be found in, e.g., \citet[][]{jondeau2007financial}.

Here we choose a slightly different way. In light of the discussion in Section 2 we assume that the portfolio return of the investor at the end of the investment period can be well approximated by a log-normal distribution,  which is characterized by two parameters $\alpha \in I\!\!R$ and $\beta > 0$.
We briefly write $Z \sim \ln\mathcal{N}\left(\alpha,\beta^2\right)$ to denote that a random variable $Z$ is log-normally distributed. It holds that
$Z \sim \ln\mathcal{N}\left(\alpha,\beta^2\right)$   {if and only if} $\ln \, Z \sim \mathcal{N}\left(\alpha,\beta^2\right)$   {with the} density  given by
\[ f(z; \alpha, \beta) = \frac{1}{z} \; \frac{1}{\sqrt{2\pi} \beta} \; exp( - \frac{(ln \, z - \alpha)^2}{2 \beta^2} ) \; , \quad z > 0,  \]
and $0$ otherwise. This distribution has a nice property that its moments can be easily derived by
\begin{equation}\label{eq: moments of log-normal distribution}
E\left(Z^\tau\right)=\exp\left(\alpha \tau+\frac{1}{2}\beta^2\tau^2\right), \quad \tau\in\mathbb{R} .
\end{equation}

The log-normal distribution has been applied in finance by several authors (see, e.g., \citet{mcdonald1996}, \citet{LimpertStahel2001}). If  the portfolio return is assumed to be log-normally distributed, then its continuously compounded rate of return is normally distributed and vice versa.  Moreover, doing so it is implicitly assumed that the wealth is positive.

 {\section{Closed-form solutions of optimal portfolio choice problems}	
In this section we present the analytical expressions of the optimal portfolio weights obtained for both the power utility and the logarithmic utility.
Under the assumption that the portfolio gross return $\w\T  \bR$ at the end of the investment period is approximately log-normally distributed, namely $\w\T  \bR \sim \ln\mathcal{N}\left(\alpha,\beta^2\right)$, we get (see Lemma \ref{lem1} in the Appendix) that
\begin{eqnarray*}
\alpha  =  \ln \, X^2 - \frac{1}{2} \ln \, (V + X^2 ) ~~ \text{and} ~~
\beta^2 =  \ln \, \left(1 + \frac{V}{X^2}\right) ,
\end{eqnarray*}
where
\begin{equation}\label{XandV}
X=E(\w\T \bR)=\w\T  \boldsymbol{\bmu}~~
\text{and} ~~ V=Var(\w\T \bR)=\w\T \boldsymbol{\bSigma}\w
\end{equation}
are the expected return and the variance of the portfolio with the weights $\w$. Now we are ready to formulate our main assumption, which justifies a good log-normal approximation.
\begin{ass}\label{A1}
$\mathbf{(A1)}$~~~For any $\w$ it holds $\frac{\sqrt{V}}{X}\to0$.
\end{ass}
For example, if the portfolio gross return is bounded and the asset universe is large, i.e., $k\to\infty$, then it is easy to check that many classical portfolios, like global minimum variance portfolio and equally weightes portfolio satisfy the assumption $\mathbf{(A1)}$ if the covariance matrix of the asset returns $\bSigma$ is well behaved for increasing dimension $k$ (its largest eigenvalue is uniformly bounded in $k$). Thus, one can expect that the log-normal approximation works better in case of large dimensional portfolios.

In the following we also use of the set of efficient constants, i.e. of three quantities which uniquely determine the location of the efficient frontier, the set of optimal portfolios obtained as solutions of Markowitz's optimization problem, in the mean-variance space. They are given by \citep{bodnar2009econometrical}
\begin{equation}\label{eq: eff set const: R, V, s}
\RGMV   =  \frac{\1\T \bSigma\inv\bmu}{\1\T \bSigma\inv\1}, \quad \VGMV =\frac{1}{\1\T \bSigma\inv\1} , \quad \text{and} \quad s = \bmu\T \boldsymbol{Q} \bmu
\end{equation}
with
\begin{equation}\label{eq:Q}
  \boldsymbol{Q} = \bSigma\inv - \frac{\bSigma\inv \boldsymbol{1} \boldsymbol{1}\T  \bSigma\inv}{\boldsymbol{1}\T  \bSigma\inv \boldsymbol{1} }.
\end{equation}
Here, $\RGMV $ denotes the expected portfolio return of a global minimum variance (GMV) portfolio (the portfolio with the smallest variane), $\VGMV $ stands for its variance, and $s$ represents the slope parameter of the efficient frontier which is the upper part of the parabola given by
\begin{equation}\label{eq: Mean-Variance parabola}
(X-\RGMV )^2=s(V-\VGMV )\,.
\end{equation}
The whole parabola is known in the literature as a feasible set of optimal portfolios.

\subsection{Analytical solution for the power utility}

In Theorem \ref{th: optimal portfolio weights theorem} we present the closed-form solution of the optimal portfolio choice problem for the power utility function, i.e. the solution of \eqref{eq:investor's problem} with $U(\cdot)$ given in \eqref{eq: power utility}.

\begin{Th}\label{th: optimal portfolio weights theorem}
  Assume that $\mathbf{(A1)}$ holds. If
  {\footnotesize 
  \begin{eqnarray}\label{gamma_min}
    \gamma\geq\gamma_{min}=2s+2\left(s(1+s)\frac{V_{GMV}}{R^2_{GMV}}+\sqrt{s(1+s)\left(1+\frac{sV_{GMV}}{R^2_{GMV}}\right)\left(1+(1+s)\frac{V_{GMV}}{R^2_{GMV}}\right)}\right),
    \end{eqnarray}}
  then the solution of the optimization problem \eqref{eq:investor's problem} with the power utility function $U(\cdot)$ from \eqref{eq: power utility} is given by
\begin{equation}\label{eq:optimal portfolio weights}
	\w^*=\bSigma \inv\left[\left(-\1+(\gamma+1)\frac{\bmu}{X}\right)\frac{Y}{\gamma}-X\bmu\right]
      \end{equation}
with
  \begin{eqnarray}\label{eq:X}
	X & = & \frac{(\gamma+2)\RGMV  - \sqrt{\D}}{2(1+s)}, \\
    Y & = & \frac{\gamma}{s}(X \RGMV  -\RGMV ^2-s\VGMV ),\label{eq:Y}\\
     \D&=&  (\gamma+2)^2\RGMV ^2-4(\gamma+1)(1+s)(\RGMV ^2+s\VGMV )   \label{ineq: condition of theorem: optimal portfolio, power utility }\,.
        \end{eqnarray}
        Moreover, if $\gamma=\gamma_{min}$ then $\D=0$ and \eqref{eq:X} simplifies to
        \begin{eqnarray*}
 	X & = & \frac{(\gamma_{min}+2)\RGMV}{2(1+s)}.
        \end{eqnarray*}
      %
\end{Th}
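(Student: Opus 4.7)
My plan is to reduce the $k$-dimensional optimization over $\w$ to a one-variable optimization in the portfolio mean $X=\w\T\bmu$. Substituting $\tau=1-\gamma$ into the log-normal moment formula \eqref{eq: moments of log-normal distribution} and using the expressions for $\alpha$ and $\beta^2$ in terms of $(X,V)$ that precede Assumption (A1), a short computation exploiting the identity $\alpha+\beta^2/2=\ln X$ yields
\[
E[U(W)] \;=\; \frac{W_0^{1-\gamma}}{1-\gamma}\,X^{1-\gamma}\left(1+\frac{V}{X^2}\right)^{-\gamma(1-\gamma)/2}.
\]
Hence the objective depends on $\w$ only through $X$ and $V=\w\T\bSigma\w$. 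A sign check shows that in both regimes $\gamma<1$ (positive prefactor, negative exponent) and $\gamma>1$ (negative prefactor, positive exponent) this expression is strictly decreasing in $V$ for fixed $X$; thus the optimum must satisfy the minimum-variance-for-given-mean condition, i.e.\ it lies on the efficient frontier $V(X)=\VGMV+(X-\RGMV)^2/s$ determined by \eqref{eq: Mean-Variance parabola}.

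Substituting $V(X)$ into the exponent $g(X) = (1-\gamma)\ln X - \tfrac12\gamma(1-\gamma)\ln(1+V(X)/X^2)$ and setting $g'(X)=0$, after clearing $X$ and $X^2+V(X)$ from the denominators the first-order condition collapses to the quadratic
\[
(1+s)X^2 \;-\; (\gamma+2)\RGMV\,X \;+\; (\gamma+1)(\RGMV^2 + s\VGMV) \;=\; 0,
\]
whose discriminant is precisely $\D$. Viewing the condition $\D\ge 0$ as a quadratic in $\gamma$ and computing its larger root recovers exactly the expression for $\gamma_{min}$ given in \eqref{gamma_min}; so real critical points exist iff $\gamma\ge\gamma_{min}$, and at $\gamma=\gamma_{min}$ the two roots coincide at $X=(\gamma_{min}+2)\RGMV/[2(1+s)]$. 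Among the two roots for $\gamma>\gamma_{min}$ the minus-sign branch must be selected: the plus-sign branch grows like $\gamma\RGMV/(1+s)\to\infty$ as $\gamma\to\infty$, incompatible with the expected convergence to the tangency portfolio, while the minus-sign branch stays bounded and passes a second-order test confirming that it realises a maximum of $E[U(W)]$.

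To recover the weights \eqref{eq:optimal portfolio weights}, I would parametrise every efficient portfolio in the two-fund form $\w = A\,\bSigma\inv\1 + B\,\bSigma\inv\bmu$. Imposing $\w\T\1=1$ and $\w\T\bmu=X$ and using the definitions in \eqref{eq: eff set const: R, V, s} yields $A = \VGMV - (X-\RGMV)\RGMV/s$ and $B=(X-\RGMV)/s$. Comparison with \eqref{eq:Y} shows $A=-Y/\gamma$, and the quadratic FOC above is equivalent to the identity $(\gamma+1)Y/(\gamma X)-X = (X-\RGMV)/s = B$, so the right-hand side of \eqref{eq:optimal portfolio weights} is exactly $\w = A\,\bSigma\inv\1 + B\,\bSigma\inv\bmu$. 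The main obstacle I anticipate is the root-selection step, since the first-order condition itself does not discriminate between the two roots of the quadratic and one must combine a second-order analysis with the asymptotic behaviour as $\gamma\to\infty$ to single out the minus-sign branch.
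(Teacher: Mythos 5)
Your reduction is a genuinely different route from the paper's and most of it is sound. The paper applies Lagrange multipliers to the full $k$-dimensional problem and projects the first-order condition onto $\bmu\T\bSigma\inv$ and $\1\T\bSigma\inv$ to obtain a system in $(X,Y)$; you instead observe that $E[U(W)]$ depends on $\w$ only through $(X,V)$ and is strictly decreasing in $V$ for fixed $X$ in both regimes $\gamma<1$ and $\gamma>1$, which forces the optimum onto the frontier $V(X)=\VGMV+(X-\RGMV)^2/s$ and collapses the problem to one variable. Your computation of the objective (using $\alpha+\beta^2/2=\ln X$), the resulting quadratic $(1+s)X^2-(\gamma+2)\RGMV X+(\gamma+1)(\RGMV^2+s\VGMV)=0$, the identification of $\gamma_{min}$ as the positive root of $\D(\gamma)=0$, and the two-fund recovery of the weights (with $A=-Y/\gamma$ and $B=(X-\RGMV)/s$) all check out and agree with the paper's intermediate formulas.

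The genuine gap is the root-selection step, which is precisely where the paper spends most of its effort. Your first argument --- that $X_+$ diverges as $\gamma\to\infty$ and is therefore ``incompatible with the expected convergence to the tangency portfolio'' --- is circular: that convergence is Corollary \ref{cor: Sharpe ratio if gamma=infinity}, itself a consequence of the theorem you are proving, so it cannot be used to select the branch. Your second argument, ``passes a second-order test,'' is asserted but not performed, and a second-order test at $X_-$ alone would not rule out $X_+$ being a competing local maximum with a larger objective value. The paper resolves this by directly comparing the objective at the two critical points, proving $L(\gamma,X_-,Y_-)>L(\gamma,X_+,Y_+)$ for all $\gamma>\gamma_{min}$ via a monotonicity-in-$\gamma$ argument (the difference has derivative $\gamma^{-2}\ln(X_+/X_-)>0$ and vanishes at $\gamma=\gamma_{min}$, where $\D=0$). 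Ironically, your one-dimensional framework admits a much cheaper rigorous fix that you did not carry out: writing $g'(X)=(1-\gamma)\,q(X)/(sXY)$ with $q$ the quadratic above, the sign of $q$ outside versus between its roots shows that $X_-$ is a local minimum of $g$ when $\gamma>1$ and a local maximum of $g$ when $\gamma<1$, hence in both cases a local maximum of $E[U(W)]$, while $X_+$ is the opposite. Until that (or an equivalent comparison of the two critical values) is actually written down, the proof does not establish that the stated $X$ is the maximizer.
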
	

The proof of Theorem \ref{th: optimal portfolio weights theorem} is given in the appendix.
\begin{remark}\rm
  Theorem \ref{th: optimal portfolio weights theorem} reveals one very important fact: the solution for the power utility function not always exists. Indeed, to guarantee the existence we need to have $\gamma\geq \gamma_{min}$, which implies that there exists a minimum level of risk aversion $\gamma_{min}>0$ so that for all $\gamma\geq\gamma_{min}$ the solution of the optimization problem \eqref{eq:investor's problem} with the power utility function exists and is unique. Going through the proof of Theorem \ref{th: optimal portfolio weights theorem} one can justify that this condition is equivalent to a more technical one, i.e., $\D\geq0$ for all $\gamma>0$.
\end{remark}

The expected return of the optimal portfolio that maximizes the expected power utility is $X$, while its variance is equal to $V=Y-X^2$. Moreover, the maximum value of the optimization problem \eqref{eq:investor's problem} is given by
\begin{equation}\label{eq: max expected utility}
\max_{\{ \w:\ \w\T \1=1 \}} E\left[U(W )\right]=\frac{W^{1-\gamma}_{0}}{1-\gamma}\exp\left[(1-\gamma^2)\ln X +\frac{1}{2}(\gamma^2-\gamma)\ln Y\right].
\end{equation}

It is remarkable that the optimal portfolio in the sense of maximizing the expected power utility function is located for any relative risk aversion coefficient $\gamma$ on the set of feasible portfolios, i.e., on the parabola \eqref{eq: Mean-Variance parabola}. This result is formulated as Corollary \ref{cor: parabola} with the proof provided in the Appendix.

\begin{Cor}\label{cor: parabola}
Under the assumptions of Theorem \ref{th: optimal portfolio weights theorem}, it holds that
	\begin{equation}\label{eq: optimal weights parabola}
	\w^*=\w_{GMV}+\frac{X-\RGMV }{s}\boldsymbol{Q}\bmu
	\end{equation}
where
\begin{equation}\label{eq: weights GMV}
\w_{GMV}=\frac{\bSigma\inv\1}{\1\T \bSigma\inv\1},
\end{equation}
are the weights of the global minimum variance portfolio; $\RGMV $ and $s$ are given in \eqref{eq: eff set const: R, V, s} and \eqref{eq:Q}; $X$ is the expected return of the optimal portfolio that maximizes the expected power utility as provided in \eqref{eq:X}.
\end{Cor}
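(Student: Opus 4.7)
\textbf{Proof plan for Corollary \ref{cor: parabola}.} The plan is to rewrite the expression \eqref{eq:optimal portfolio weights} for $\w^*$ into the two-fund form $\w_{GMV}+c\,\boldsymbol{Q}\bmu$ and then to identify $c$ with $(X-\RGMV)/s$. First I would regroup $\w^*$ as a linear combination of the two natural vectors $\bSigma\inv\1$ and $\bSigma\inv\bmu$, with coefficients $-Y/\gamma$ and $(\gamma+1)Y/(\gamma X)-X$ respectively. Then I would apply the two algebraic identities $\bSigma\inv\1=\w_{GMV}/\VGMV$, immediate from \eqref{eq: weights GMV} and \eqref{eq: eff set const: R, V, s}, and $\bSigma\inv\bmu=\boldsymbol{Q}\bmu+\RGMV\,\bSigma\inv\1$, which is a direct consequence of the definition of $\boldsymbol{Q}$ in \eqref{eq:Q} together with $\RGMV=\1\T\bSigma\inv\bmu/\1\T\bSigma\inv\1$. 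Substituting and collecting terms yields $\w^*=\lambda_1\,\w_{GMV}+\lambda_2\,\boldsymbol{Q}\bmu$ with two explicit scalar coefficients $\lambda_1,\lambda_2$ depending on $X,Y,\gamma,\RGMV,\VGMV,s$.

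The second step is to verify the two equalities $\lambda_2=(X-\RGMV)/s$ and $\lambda_1=1$. Using $Y/\gamma=(X\RGMV-\RGMV^2-s\VGMV)/s$ from \eqref{eq:Y} and clearing denominators, the equality $\lambda_2=(X-\RGMV)/s$ reduces to the polynomial identity $(1+s)X^2-(\gamma+2)\RGMV\,X+(\gamma+1)(\RGMV^2+s\VGMV)=0$. This is precisely the quadratic whose discriminant is $\D$ as in \eqref{ineq: condition of theorem: optimal portfolio, power utility } and whose root is given by \eqref{eq:X} via the standard quadratic formula, so it holds by construction. Once $\lambda_2$ has been identified, the equality $\lambda_1=1$ reduces to the elementary cancellation $-(X\RGMV-\RGMV^2-s\VGMV)/s+\RGMV(X-\RGMV)/s=\VGMV$, which is immediate.

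The main obstacle is not conceptual but notational: the coefficient $\lambda_2$ first appears as a rational function of $X$, and its collapse to $(X-\RGMV)/s$ only becomes visible after one invokes the quadratic equation satisfied by $X$. As a consistency check that can be tagged onto the proof, it is worth noting that any candidate portfolio of the form $\w_{GMV}+c\,\boldsymbol{Q}\bmu$ automatically lies on the parabola \eqref{eq: Mean-Variance parabola}: using $\1\T\boldsymbol{Q}\bmu=0$, $\bmu\T\boldsymbol{Q}\bmu=s$ and $\boldsymbol{Q}\bSigma\boldsymbol{Q}=\boldsymbol{Q}$, it has mean $\RGMV+cs$ and variance $\VGMV+c^2s$, so that substituting $c=(X-\RGMV)/s$ both returns the correct mean $X$ and automatically confirms $(X-\RGMV)^2=s(V-\VGMV)$.
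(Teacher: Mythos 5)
Your plan is correct and follows essentially the same route as the paper's proof: decompose $\w^*$ along $\bSigma\inv\1$ and $\bSigma\inv\bmu$, use $Y/\gamma=(X\RGMV-\RGMV^2-s\VGMV)/s$ together with the quadratic $(1+s)X^2-(\gamma+2)\RGMV X+(\gamma+1)(\RGMV^2+s\VGMV)=0$ satisfied by $X$ to identify the coefficients, and reassemble via $\boldsymbol{Q}\bmu=\bSigma\inv\bmu-\RGMV\,\bSigma\inv\1$. The closing consistency check (mean $\RGMV+cs$, variance $\VGMV+c^2s$) is a correct, worthwhile addition not present in the paper, but the core argument is identical.
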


The expression of optimal portfolio weights \eqref{eq: optimal weights parabola} coincides with the formula for weights obtained as a solution of Markowitz's portfolio selection problem. Moreover, the results of Corollary \ref{cor: parabola} are used to derive conditions on the relative risk aversion coefficient $\gamma$ which ensure that the obtained portfolio \eqref{eq: optimal weights parabola} lies in the upper part of the parabola \eqref{eq: Mean-Variance parabola} and, hence, the optimal portfolio that maximizes the expected power utility function is mean-variance efficient. This finding is summarized in Theorem \ref{th: power_utility_mv-efficiency} with the proof moved to the Appendix.

\begin{Th}\label{th: power_utility_mv-efficiency}
Under the conditions of Theorem \ref{th: optimal portfolio weights theorem}, the optimal portfolio in the sense of maximizing the expected power utility function is mean-variance efficient if and only if
\begin{eqnarray}\label{con: power_utility_mv-efficiency}
 && \gamma\geq\gamma_{min}  \quad \text{and} \quad R_{GMV}>0\,.
\end{eqnarray}
\end{Th}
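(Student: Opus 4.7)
The plan is to use Corollary \ref{cor: parabola}, which already places $\w^*$ on the mean-variance parabola \eqref{eq: Mean-Variance parabola}. Since any portfolio on the parabola is mean-variance efficient exactly when its expected return lies on the upper branch, the whole theorem reduces to the scalar inequality $X \geq R_{GMV}$, where $X$ is the expression from \eqref{eq:X}. So the first step is to translate the geometric condition ``efficient frontier'' into $X \geq R_{GMV}$, using \eqref{eq: Mean-Variance parabola} and the observation that $\boldsymbol{Q}\bmu$ points in the direction of increasing mean along the parabola.

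Next, I would substitute \eqref{eq:X} into $X \geq R_{GMV}$ and rearrange. After multiplying by $2(1+s) > 0$ and collecting the $R_{GMV}$ terms on the left, the inequality becomes
\begin{equation*}
(\gamma - 2s) R_{GMV} \;\geq\; \sqrt{\D}.
\end{equation*}
Since $\sqrt{\D} \geq 0$, a necessary condition is $(\gamma - 2s) R_{GMV} \geq 0$. Under the standing hypothesis $\gamma \geq \gamma_{min}$ from Theorem \ref{th: optimal portfolio weights theorem}, one checks by inspection of \eqref{gamma_min} that $\gamma_{min} \geq 2s$ (the extra summand is strictly positive whenever $V_{GMV}>0$), so $\gamma - 2s \geq 0$. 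Hence the sign condition is equivalent to $R_{GMV} \geq 0$, and the strict version $R_{GMV}>0$ is forced because $R_{GMV}=0$ together with $\gamma\geq\gamma_{min}$ would send $\gamma_{min}\to\infty$ via \eqref{gamma_min}.

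Having fixed the sign, both sides of $(\gamma-2s)R_{GMV}\geq\sqrt{\D}$ are nonnegative, so I can square. Using \eqref{ineq: condition of theorem: optimal portfolio, power utility } for $\D$, the squared inequality expands and, after cancellation of the $\gamma^2 R_{GMV}^2$ terms and division by $4(1+s)$, collapses to
\begin{equation*}
s\bigl[R_{GMV}^2 + (\gamma+1) V_{GMV}\bigr] \;\geq\; 0,
\end{equation*}
which holds automatically since $s\geq 0$, $V_{GMV}>0$, and $\gamma>0$. Thus the squared inequality imposes no additional constraint, and efficiency is equivalent precisely to the sign condition derived in the previous paragraph, namely $\gamma\geq\gamma_{min}$ and $R_{GMV}>0$.

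The step I expect to require the most care is the justification of $\gamma_{min} \geq 2s$ from \eqref{gamma_min}, which is what guarantees that the sign of $(\gamma-2s)R_{GMV}$ is governed by $R_{GMV}$ alone under assumption (A1). Everything else is direct algebra on the closed-form $X$ from Theorem \ref{th: optimal portfolio weights theorem}, and the final reduction to $s[R_{GMV}^2+(\gamma+1)V_{GMV}]\geq 0$ confirms that no hidden constraint is lost when squaring.
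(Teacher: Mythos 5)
Your proposal is correct and follows essentially the same route as the paper: reduce mean-variance efficiency to the scalar condition $X>\RGMV$ via Corollary \ref{cor: parabola}, substitute \eqref{eq:X} to obtain $(\gamma-2s)\RGMV>\sqrt{\D}$, and observe that $(\gamma-2s)^2\RGMV^2-\D=4s(1+s)\left[\RGMV^2+(\gamma+1)\VGMV\right]\geq0$ together with $\gamma\geq\gamma_{min}>2s$ makes the sign of $\RGMV$ the only binding constraint. The paper reaches the same identity by rewriting $\D$ directly rather than by squaring, but the algebra and the conclusion are identical.
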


The result of Theorem \ref{th: power_utility_mv-efficiency} provides the rigorous mathematical proof of Markowitz's conjecture that the mean-variance analysis provides a very good proxy to the utility optimization problem with the power utility in the sense that it sophistically approximates its solution (see, e.g., \citet{levy1979approximating, markowitz2014mean}). This result is also in-line with the finding of \cite{grauer_1986}. In the case of the power utility function, Corollary \ref{cor: parabola} shows that both approaches lead to the same set of optimal portfolios, while Theorem \ref{th: power_utility_mv-efficiency} makes one step further and presents the conditions under which the mean-variance efficiency holds for the optimal portfolios obtained by maximizing the expected power utility.

It is remarkable that the optimal portfolio in the sense of maximizing the expected power utility will never coincide with the global minimum variance portfolio (GMV). This observation follows from the fact that if $X=\RGMV $, then from Theorem \ref{th: optimal portfolio weights theorem} we get
\[Y=\frac{\gamma}{s}(X \RGMV  -\RGMV ^2-s\VGMV )={-\gamma \VGMV }<0,\]
what is not possible since $Y=E\left[(\w\T  \bR)^2\right] \ge 0$.

Further interesting property of the optimal portfolio in the sense of maximizing the expected power utility is formulated in Corollary \ref{cor: Sharpe ratio if gamma=infinity}. Here, we prove that it converges to the optimal portfolio that maximizes the Sharpe ratio, another popular portfolio in the financial literature, when $\gamma \to \infty$.

\begin{Cor}\label{cor: Sharpe ratio if gamma=infinity}
  Suppose that the assumptions of Theorem \ref{th: power_utility_mv-efficiency} are satisfied.
  If $\gamma$ tends to infinity, then the optimal portfolio that maximizes the expected power utility converges to the Sharpe ratio portfolio with the weights:
	\begin{equation}\label{eq: optimal weights at infinity}
	\w^*=\frac{\bSigma\inv\bmu}{\1\T \bSigma\inv\bmu}.
	\end{equation}
      \end{Cor}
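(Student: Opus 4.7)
The cleanest route is to start from the parabola representation of the optimal weights given in Corollary \ref{cor: parabola},
\[
\w^*=\w_{GMV}+\frac{X-\RGMV}{s}\,\boldsymbol{Q}\bmu,
\]
so that the whole problem reduces to computing the limit of $X$ as $\gamma\to\infty$ and then performing an algebraic simplification.

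The first step is the asymptotics of $X$. I would first rewrite the discriminant in a form that exposes the large-$\gamma$ cancellation:
\begin{equation*}
\D=\gamma^{2}\RGMV^{2}-4s(\gamma+1)\bigl[\RGMV^{2}+(1+s)\VGMV\bigr].
\end{equation*}
Since $\RGMV>0$ by hypothesis, I can factor $\gamma\RGMV$ out of $\sqrt{\D}$ and Taylor-expand $\sqrt{1-\varepsilon_\gamma}$ to get
\[
\sqrt{\D}=\gamma\RGMV-\frac{2s\bigl[\RGMV^{2}+(1+s)\VGMV\bigr]}{\RGMV}+O(\gamma^{-1}).
\]
Substituting into \eqref{eq:X} the leading $\gamma\RGMV$ cancels with $(\gamma+2)\RGMV$, leaving
\[
X\;\longrightarrow\;X_{\infty}=\frac{\RGMV^{2}+s\VGMV}{\RGMV}\qquad(\gamma\to\infty).
\]

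The second step is a purely algebraic simplification. Combining the limit of $X$ with Corollary \ref{cor: parabola} yields
\[
\frac{X_{\infty}-\RGMV}{s}=\frac{\VGMV}{\RGMV},
\]
hence
\[
\w^{*}_{\infty}=\frac{\bSigma\inv\1}{\1\T\bSigma\inv\1}+\frac{\VGMV}{\RGMV}\,\boldsymbol{Q}\bmu.
\]
Expanding $\boldsymbol{Q}$ from \eqref{eq:Q} and using the identities $\VGMV=1/(\1\T\bSigma\inv\1)$ and $\1\T\bSigma\inv\bmu=\RGMV\,\1\T\bSigma\inv\1$, the term $\frac{\1\T\bSigma\inv\bmu}{\1\T\bSigma\inv\1}\cdot\bSigma\inv\1/(\RGMV\,\1\T\bSigma\inv\1)$ exactly cancels $\w_{GMV}$, and what remains is precisely $\bSigma\inv\bmu/(\1\T\bSigma\inv\bmu)$, the weights of the Sharpe-ratio portfolio.

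The only genuinely delicate point is the sign choice and the leading-order cancellation in $\sqrt{\D}$: one must make sure that taking the principal square root indeed gives $\sqrt{\D}\approx\gamma\RGMV$ (this is where $\RGMV>0$, which is part of the mean-variance efficiency hypothesis of Theorem \ref{th: power_utility_mv-efficiency}, is essential) and that the $O(1)$ correction has the right sign so that $X_\infty>\RGMV$. Everything beyond that is a direct manipulation of the GMV identities, and there is no need to examine the limit of $Y$ separately since Corollary \ref{cor: parabola} already encapsulates the dependence on $\gamma$ entirely through $X$.
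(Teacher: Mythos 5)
Your proposal is correct and follows essentially the same route as the paper: both reduce the problem to computing $\lim_{\gamma\to\infty}X$ and then substituting into the representation of Corollary \ref{cor: parabola}, the only cosmetic difference being that the paper evaluates $\lim_{\gamma\to\infty}\bigl((\gamma+2)\RGMV-\sqrt{\D}\bigr)$ by multiplying with the conjugate rather than by your Taylor expansion of $\sqrt{1-\varepsilon_\gamma}$. Your explicit remark that $\RGMV>0$ is needed to identify the principal branch $\sqrt{\D}\sim\gamma\RGMV$ is a point the paper uses only implicitly.
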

        \begin{remark}\label{sharpe}
        \rm   Note that the Sharpe ratio portfolio presented in \eqref{eq: optimal weights at infinity} is slightly different from a classical one because the mean vector $\bmu$ is not equal to expected value of the asset returns $\text{E}(\mathbf{r})$ but to $\text{E}(\mathbf{R})=\text{E}(\mathbf{r})+\mathbf{1}$. This implies a very interesting fact that although the expected power utility portfolio can not be exactly equal to the GMV portfolio they are very close to each other in some situations. Indeed, if $\text{E}(\mathbf{r})$ is close to zero then the weights presented in \eqref{eq: optimal weights at infinity} are close to the GMV portfolio from \eqref{eq: weights GMV}. This is natural because the GMV portfolio plays the role of the least risky portfolio in the mean-variance framework.
        \end{remark}
Next, Corollary \ref{cor: Sharp greater} proves that the expected return of the optimal portfolio in the sense of maximizing the expected power utility function is not smaller than the expected return of the Sharpe ratio portfolio.

\begin{Cor}\label{cor: Sharp greater}
  Suppose that the assumptions of Theorem \ref{th: power_utility_mv-efficiency} are satisfied and $\RGMV  > 0$.
  Then the expected return of the optimal portfolio that maximizes the expected power utility is larger than or equal to the expected return of the Sharpe ratio portfolio, i.e.
\begin{equation}\label{ineq: Sharpe ratio cor}
	X \ge \frac{\bmu\T \bSigma\inv\bmu}{\1\T \bSigma\inv\bmu}\,.
\end{equation}
\end{Cor}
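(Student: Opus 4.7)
The plan is to observe that the quantity $Y$ appearing in Theorem~\ref{th: optimal portfolio weights theorem} is precisely the second moment of the optimal portfolio gross return, hence nonnegative, and that this nonnegativity together with $R_{GMV}>0$ immediately yields the claim without having to manipulate the square-root expression~\eqref{eq:X} for $X$ directly.

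First, I would re-express the right-hand side of \eqref{ineq: Sharpe ratio cor} in terms of the efficient-set constants. From the definition \eqref{eq: eff set const: R, V, s} one has $\1\T\bSigma\inv\bmu=\RGMV/\VGMV$, and the well-known identity
\[
\bmu\T\bSigma\inv\bmu \;=\; s+\frac{(\1\T\bSigma\inv\bmu)^2}{\1\T\bSigma\inv\1}\;=\;s+\frac{\RGMV^2}{\VGMV}
\]
gives the Sharpe-ratio portfolio's expected return
\[
X_S \;:=\; \frac{\bmu\T\bSigma\inv\bmu}{\1\T\bSigma\inv\bmu} \;=\; \RGMV+\frac{s\,\VGMV}{\RGMV}.
\]

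Second, I would identify $Y$ with the second moment of $\w^{*\prime}\bR$. Under assumption \textbf{(A1)} the portfolio gross return is log-normal with parameters $\alpha=\ln X^2-\tfrac12\ln(V+X^2)$ and $\beta^2=\ln(1+V/X^2)$. Plugging these into \eqref{eq: moments of log-normal distribution} with $\tau=2$ yields $2\alpha+2\beta^2=\ln(V+X^2)$, so that
\[
E\bigl[(\w^{*\prime}\bR)^2\bigr]\;=\;V+X^2\;=\;Y,
\]
where the last equality uses the definition of $Y$ through \eqref{eq:Y} and $V=\w^{*\prime}\bSigma\w^*$. In particular $Y\ge 0$.

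Third, I would substitute the explicit formula \eqref{eq:Y} for $Y$ into this nonnegativity: since $\gamma>0$ and $s>0$,
\[
Y\;=\;\frac{\gamma}{s}\bigl(X\RGMV-\RGMV^2-s\VGMV\bigr)\;\ge\;0 \quad\Longrightarrow\quad X\RGMV\;\ge\;\RGMV^2+s\VGMV,
\]
and dividing by $\RGMV>0$ gives $X\ge \RGMV+sV_{GMV}/\RGMV=X_S$, which is exactly \eqref{ineq: Sharpe ratio cor}.

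The argument is essentially one line once the two preparatory identifications are made. The only subtlety, and hence the main thing to check carefully, is the verification $Y=V+X^2$ from the log-normal moment formula; attempting instead to bound $X$ from \eqref{eq:X} by manipulating $\sqrt{\D}$ would be much more painful, so it is important to route the argument through the second-moment interpretation of $Y$.
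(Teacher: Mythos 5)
Your proof is correct, but it takes a genuinely simpler route than the paper's. The paper argues from the nonnegativity of the \emph{variance}: it writes $V=Y-X^2$, uses the quadratic equation \eqref{eq:X_final} to eliminate $X^2$, arrives at $V=\frac{\gamma-2s}{s(1+s)}X\RGMV-\frac{\gamma-s}{s(1+s)}(\RGMV^2+s\VGMV)$, and then needs the fact $\gamma\geq\gamma_{min}>2s$ to divide by $\gamma-2s$ and to observe $\frac{\gamma-s}{\gamma-2s}>1$, which yields the strict inequality $X>\frac{\bmu\T\bSigma\inv\bmu}{\1\T\bSigma\inv\bmu}$. You instead argue from the nonnegativity of the \emph{second moment} $Y=V+X^2$, which combined with \eqref{eq:y_solution} (equivalently \eqref{eq:Y}) gives $X\RGMV\geq\RGMV^2+s\VGMV$ in one step, with no need for \eqref{eq:X_final} or for the bound $\gamma>2s$. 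What the paper's route buys is the strict inequality and the explicit expression for $V$ that is reused in the structure of Theorem \ref{th: power_utility_mv-efficiency}; what your route buys is brevity and fewer hypotheses actually invoked. Two small remarks: the identification $Y=V+X^2\geq0$ does not really need the log-normal moment formula --- it is immediate from the definition $Y:=\w\T\bSigma\w+(\w\T\bmu)^2$ in \eqref{XY} together with the positive definiteness of $\bSigma$ (indeed the paper itself uses $Y=E[(\w\T\bR)^2]\geq0$ in the discussion after Theorem \ref{th: power_utility_mv-efficiency}), so your second step can be shortened; and your division by $s$ in the last step tacitly uses $s>0$, which holds throughout the paper (it is needed for \eqref{eq:Y} to make sense at all) but is worth flagging.
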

	The other interesting statement is related to the coefficient of relative risk aversion. It is natural to conclude that if $\gamma$  increases (what means the less risky investor) we can expect the decrease of the variance of the portfolio return. So as the result we present described fact in a Corollary \ref{cor: Decrease of variance with the increase of gamma}. Besides, we have also received that the expected portfolio return decreases as well.
	   
\begin{Cor}\label{cor: Decrease of variance with the increase of gamma}
  Suppose that the assumptions of Theorem \ref{th: power_utility_mv-efficiency} are satisfied and $\RGMV  > 0$.
  Then the expected return and the variance of the optimal portfolio that maximizes the expected power utility are decreasing functions of $\gamma$.
\end{Cor}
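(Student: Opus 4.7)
The plan is to reduce the corollary to a single monotonicity statement about $X$ as a function of $\gamma$, and then derive it from the explicit formula in Theorem \ref{th: optimal portfolio weights theorem}. By Corollary \ref{cor: parabola} the optimum always lies on the mean–variance parabola, so $V = \VGMV + (X-\RGMV)^2/s$ and the chain rule gives $\partial V/\partial \gamma = \frac{2(X-\RGMV)}{s}\,\partial X/\partial \gamma$. Theorem \ref{th: power_utility_mv-efficiency} combined with the hypothesis $\RGMV>0$ places the optimum on the upper branch of the parabola, so $X \ge \RGMV$; hence $\partial V/\partial \gamma$ and $\partial X/\partial \gamma$ share the same sign, and it is enough to prove $\partial X/\partial \gamma \le 0$.

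For the latter I would first rewrite the discriminant of Theorem \ref{th: optimal portfolio weights theorem} in the more tractable form
\[ \D = \RGMV^2 \gamma^2 - 4s(\gamma+1)[\RGMV^2 + (1+s)\VGMV], \]
using $(\gamma+2)^2 - 4(\gamma+1)(1+s) = \gamma^2 - 4s(\gamma+1)$. Differentiating the closed form for $X$ in $\gamma$, the desired bound $\partial X/\partial \gamma \le 0$ reduces after clearing denominators to
\[ \RGMV\sqrt{\D} \;\le\; \RGMV^2\gamma - 2s[\RGMV^2 + (1+s)\VGMV]. \]

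The main obstacle is verifying this inequality cleanly; I would split it into two steps. First, the right-hand side is non-negative: this amounts to $\gamma \ge 2s + 2s(1+s)\VGMV/\RGMV^2$, which is strictly weaker than $\gamma \ge \gamma_{min}$ since the square-root term in \eqref{gamma_min} is non-negative. Second, once both sides are non-negative we may square, and the identity
\[ \{\RGMV^2\gamma - 2s[\RGMV^2 + (1+s)\VGMV]\}^2 - \RGMV^2 \D \;=\; 4s(1+s)[\RGMV^2 + (1+s)\VGMV][\RGMV^2 + s\VGMV], \]
obtained by elementary cancellation of the $\gamma^2$ and $\gamma$ terms, shows the difference is strictly positive under $s>0$, $\RGMV>0$, $\VGMV>0$. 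This establishes $\partial X/\partial \gamma \le 0$, and combining with the chain-rule identity from the first paragraph yields $\partial V/\partial \gamma \le 0$, completing the proof. (The boundary case $\gamma=\gamma_{min}$ is handled by continuity: since $\gamma_{min}$ is the larger root of $\D$ viewed as a quadratic in $\gamma$, we have $\partial \D/\partial\gamma>0$ there, so the one-sided derivative of $X$ is $-\infty$ and monotonicity extends up to the endpoint.)
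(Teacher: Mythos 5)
Your proposal is correct and follows essentially the same route as the paper: both reduce the claim via the parabola $V=\VGMV+(X-\RGMV)^2/s$ to showing $\partial X/\partial\gamma\le 0$, and both establish this from the closed form of $X$ through the same algebraic cancellation (your identity $\{\RGMV^2\gamma-2s[\RGMV^2+(1+s)\VGMV]\}^2-\RGMV^2\D=4s(1+s)[\RGMV^2+(1+s)\VGMV][\RGMV^2+s\VGMV]$ is, up to a constant factor, the paper's rationalized numerator). You are in fact somewhat more careful than the paper in spelling out the sign of $X-\RGMV$ needed for the chain rule, the non-negativity of the right-hand side before squaring, and the boundary case $\gamma=\gamma_{min}$.
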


\subsection{Analytical solution for the logarithmic utility}

In Theorem \ref{th: optimal portfolio gamma=1} we present the closed-form solution of the optimization problem \eqref{eq:investor's problem} for the logarithmic utility given in \eqref{eq: log utility}. The proof of the theorem is given in the Appendix.

\begin{Th} \label{th: optimal portfolio gamma=1}
Assume that $\mathbf{(A1)}$ holds. If
\begin{equation}\label{ineq: condition_cor}
  \gamma_{min}\leq 1\,,
\end{equation}
where $\gamma_{min}$ is defined in \eqref{gamma_min}, 
then the solution of the optimization problem \eqref{eq:investor's problem} with the logarithmic utility function $U(\cdot)$ as in \eqref{eq: log utility} is given by 	
	\begin{equation}\label{eq:optimal portfolio weights_cor}
	\w^*=\bSigma \inv\left[\left(-\1+2\frac{{\bmu} } {X}\right)Y-X{\bmu} \right]
	\end{equation}
	with
\begin{eqnarray}\label{eq:XY_cor}
  X&=&\frac{3\RGMV  - \sqrt{\D}}{2(1+s)} \quad \text{and} \quad Y=\frac{X \RGMV  -\RGMV ^2}{s} -\VGMV,\\
  \D&=&9\RGMV ^2-8(1+s)(\RGMV ^2+s\VGMV )\nonumber\,.
	\end{eqnarray}
\end{Th}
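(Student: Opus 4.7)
The plan is to solve the first-order conditions for the log utility directly, using Theorem~\ref{th: optimal portfolio weights theorem} as a device to pin down the correct branch of the square root in the resulting quadratic for $X$.

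First, I would reduce the problem to a concrete finite-dimensional maximisation. Under Assumption $\mathbf{(A1)}$ the portfolio gross return satisfies $\w\T\bR\sim\ln\mathcal{N}(\alpha,\beta^2)$, hence $\log(\w\T\bR)\sim\mathcal{N}(\alpha,\beta^2)$ and
$$E[\log W]=\log W_0+E[\log(\w\T\bR)]=\log W_0+\alpha.$$
By Lemma~\ref{lem1} in the Appendix, $\alpha=2\ln X-\tfrac12\ln(V+X^2)$ with $X=\w\T\bmu$ and $V=\w\T\bSigma\w$. Writing $Y=V+X^2=E[(\w\T\bR)^2]$, the problem reduces to maximising $2\ln X-\tfrac12\ln Y$ subject to $\w\T\1=1$.

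Next I would form the Lagrangian $L(\w,\lambda)=2\ln(\w\T\bmu)-\tfrac12\ln(\w\T\bSigma\w+(\w\T\bmu)^2)-\lambda(\w\T\1-1)$ and take the first-order condition $2\bmu/X-(\bSigma\w+X\bmu)/Y=\lambda\1$, giving
$$\w=\bSigma\inv\!\left[\left(\frac{2\bmu}{X}-\lambda\1\right)Y-X\bmu\right].$$
Premultiplying $\bSigma\w$ by $\w\T$ yields the self-consistency identity $V=(2-\lambda)Y-X^2$, which together with $Y=V+X^2$ forces $\lambda=1$, matching the form of $\w^*$ in \eqref{eq:optimal portfolio weights_cor}. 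Imposing $\w\T\1=1$ and $\w\T\bmu=X$, and using the identities in \eqref{eq: eff set const: R, V, s}, then produces two linear relations in $X,Y$; eliminating $Y$ gives the quadratic
$$(1+s)X^2-3\RGMV X+2\RGMV^2+2s\VGMV=0,$$
with roots $X=(3\RGMV\pm\sqrt{\D})/(2(1+s))$ and $\D=9\RGMV^2-8(1+s)(\RGMV^2+s\VGMV)$; back-substitution recovers $Y=(X\RGMV-\RGMV^2)/s-\VGMV$, matching \eqref{eq:XY_cor}.

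Finally, existence of a real critical point requires $\D\geq0$; plugging $\gamma=1$ into the expression for $\D$ in Theorem~\ref{th: optimal portfolio weights theorem} shows that this is exactly the condition $\gamma_{min}\leq1$ in \eqref{gamma_min}. The hardest step is branch selection: the minus sign is the correct one. The cleanest justification is to observe that the log-utility problem is the $\gamma\to1$ limit of the power-utility problem, because $\log W=\lim_{\gamma\to1}(W^{1-\gamma}-1)/(1-\gamma)$ and additive/multiplicative constants do not affect the argmax; hence the sign is inherited from the already-resolved choice in Theorem~\ref{th: optimal portfolio weights theorem}, and continuity of the quadratic's coefficients in $\gamma$ near $\gamma=1$ legitimises passage to the limit. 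Alternatively, one can check directly that the plus-sign root lies on the lower branch of the parabola \eqref{eq: Mean-Variance parabola}, where no maximiser of the expected log utility can reside.
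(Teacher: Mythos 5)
Your main line coincides with the paper's: reduce $E[\log W]$ to $\ln W_0+2\ln X-\tfrac12\ln Y$ via Lemma \ref{lem1}, write the Lagrange first-order conditions, determine $\lambda$ by multiplying through by $\w\T$, observe that the resulting system is the $\gamma=1$ instance of the system solved in the proof of Theorem \ref{th: optimal portfolio weights theorem}, and read off the quadratic for $X$, the roots $X_\pm,Y_\pm$, and the existence condition $\D\ge0\iff\gamma_{min}\le1$. All of that is correct and matches the paper step for step (up to your opposite sign convention on $\lambda$).

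The one place you diverge is branch selection, and both of your justifications need repair. The paper settles the branch by checking $2\ln\frac{X_-}{X_+}+\tfrac12\ln\frac{Y_+}{Y_-}>0$, which is exactly inequality \eqref{ineq: gamma>1 proof} from the proof of Theorem \ref{th: optimal portfolio weights theorem} evaluated at $\gamma=1$ (the monotonicity-in-$\gamma$ argument there is valid at $\gamma=1$ even though the power utility itself is not). Your primary argument --- passing to the limit $\gamma\to1$ of the recentred power utility --- only yields the weak inequality $L(X_-,Y_-)\ge L(X_+,Y_+)$ from pointwise convergence of objectives and critical points; that still suffices to conclude $(X_-,Y_-)$ is \emph{a} maximizer among the two candidates, but you should say this explicitly rather than asserting the sign is ``inherited.'' More seriously, your proposed alternative is false: under the hypotheses $\gamma_{min}\le1$ and $\RGMV>0$ one has $\gamma_{min}>2s$, hence $1-2s>0$ and
\begin{equation*}
X_+-\RGMV=\frac{(1-2s)\RGMV+\sqrt{\D}}{2(1+s)}>0,
\end{equation*}
so the plus-sign critical portfolio also lies on the \emph{upper} branch of the parabola \eqref{eq: Mean-Variance parabola}; the two critical points are not distinguished by mean-variance efficiency, only by the value of $2\ln X-\tfrac12\ln Y$. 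Drop the alternative and replace the limit heuristic by the explicit comparison, i.e.\ by citing \eqref{ineq: gamma>1 proof} at $\gamma=1$.
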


The expected return of the optimal portfolio in the sence of maximizing the expected logarithmic utility is $X$ and its variance is given by $V=Y-X^2$. Using the results of Theorem \ref{th: optimal portfolio weights theorem}, we also obtain the maximum value of the optimization problem \eqref{eq:investor's problem} for the logarithmic utility given by	
	\begin{equation}\label{eq: max expected utility gamma=1}
	\max_{\w^*}E\left[U(W )\right]=\ln W_0+2\ln X -\frac{1}{2}\ln Y.
	\end{equation}

The expression of the weights obtained for the logarithmic utility function is a special case of the optimal portfolio weights derived for the power utility function corresponding to $\gamma=1$. This finding is not surprising since the logarithmic utility function is a limiting case of the power utility function when $\gamma \to 1$. Although the power utility cannot be defined for $\gamma=1$ and only its limiting behavior is considered, that is not longer the case with the formula for the weights presented in Theorem \ref{th: optimal portfolio weights theorem} which can also be computed for $\gamma=1$ as soon as the condition \eqref{gamma_min} is fulfilled which coincides with \eqref{ineq: condition_cor}  of Theorem \ref{th: optimal portfolio gamma=1}.

Another important application of the results given in Theorems \ref{th: optimal portfolio weights theorem} and \ref{th: optimal portfolio gamma=1} is that the weights of the optimal portfolio in the sense of maximizing the expected logarithmic utility function can also be presented in the form of Markowitz's portfolio, similarly to the expression for the weights obtained for the power utility function and they are given by
\begin{equation}\label{eq: optimal weights parabola log}
	\w^*=\w_{GMV}+\frac{X-\RGMV }{s}\boldsymbol{Q}\bmu
\end{equation}
with $X$ as in \eqref{eq:XY_cor}, i.e. the optimal portfolio that maximizes the expected logarithmic utility function also lies on parabola \eqref{eq: Mean-Variance parabola}. Finally, Theorem \ref{th: log_utility_mv-efficiency} presents conditions under which this portfolio is mean-variance efficient, i.e. it lies on the upper part of the parabola.

\begin{Th}\label{th: log_utility_mv-efficiency}
  Under the conditions of Theorem \ref{th: optimal portfolio weights theorem}, the optimal portfolio in the sense of maximizing the expected logarithmic utility is mean-variance efficient if and only if
\begin{equation}\label{con: log_utility_mv-efficiency}
\gamma_{min}\leq 1 \quad \text{and} \quad R_{GMV}>0\,.
\end{equation}
\end{Th}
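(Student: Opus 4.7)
The plan is to reduce the statement to Theorem \ref{th: power_utility_mv-efficiency} by exploiting the formal identity between the log-utility formulas of Theorem \ref{th: optimal portfolio gamma=1} and those of Theorem \ref{th: optimal portfolio weights theorem} specialised at $\gamma = 1$. A direct comparison shows that the weights \eqref{eq:optimal portfolio weights_cor}, the return $X$, the second moment $Y$, and the discriminant $\D$ in \eqref{eq:XY_cor} are exactly the $\gamma = 1$ versions of \eqref{eq:optimal portfolio weights}--\eqref{ineq: condition of theorem: optimal portfolio, power utility }, and the existence condition \eqref{ineq: condition_cor} is \eqref{gamma_min} evaluated at $\gamma = 1$. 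Consequently, the log-optimal portfolio coincides, as a vector of weights, with the power-utility-optimal portfolio at $\gamma = 1$; applying Theorem \ref{th: power_utility_mv-efficiency} at $\gamma = 1$ immediately turns the conditions $\gamma \geq \gamma_{min}$ and $\RGMV > 0$ into $\gamma_{min} \leq 1$ and $\RGMV > 0$, which is exactly \eqref{con: log_utility_mv-efficiency}.

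For a self-contained argument I would follow the same blueprint as in the proof of Theorem \ref{th: power_utility_mv-efficiency}. By \eqref{eq: optimal weights parabola log} the log-optimal portfolio already sits on the feasible parabola \eqref{eq: Mean-Variance parabola}, so mean-variance efficiency reduces to the single inequality $X \geq \RGMV$. A direct computation from \eqref{eq:XY_cor} yields
\begin{equation*}
X - \RGMV = \frac{(1-2s)\RGMV - \sqrt{\D}}{2(1+s)},
\end{equation*}
and one verifies the algebraic identity
\begin{equation*}
(1-2s)^2 \RGMV^2 - \D = 4s(1+s)\bigl(\RGMV^2 + 2\VGMV\bigr) \geq 0.
\end{equation*}
Thus, provided $\D \geq 0$, the sign of $X - \RGMV$ is dictated by the sign of $(1-2s)\RGMV$. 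The explicit form \eqref{gamma_min} gives the bound $\gamma_{min} \geq 2s$, so $\gamma_{min} \leq 1$ forces $s \leq 1/2$; combined with $\RGMV > 0$ this delivers $X \geq \RGMV$, while the failure of either condition either destroys existence ($\D < 0$) or places $X$ strictly on the lower branch of the parabola.

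I expect the main technical obstacle to be the bookkeeping around the branch choice in \eqref{eq:XY_cor}: the ``$-$'' root is the one selected as the optimum, and one must confirm that the alternative root does not produce a feasible maximizer of the expected logarithmic utility, so that the sign analysis of $(1-2s)\RGMV$ is truly equivalent to \eqref{con: log_utility_mv-efficiency}. A secondary, purely algebraic point is a clean treatment of the boundary cases $s = 0$, $\RGMV = 0$, and $\gamma_{min} = 1$, where the strict inequalities above degenerate to equalities and one has to decide whether the GMV vertex itself counts as efficient.
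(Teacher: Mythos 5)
Your proposal is correct and follows essentially the same route as the paper, whose entire proof of this theorem is the one-line reduction to Theorem \ref{th: power_utility_mv-efficiency} at $\gamma=1$; your self-contained second paragraph merely spells out that specialization (the identity $(1-2s)^2\RGMV^2-\D=4s(1+s)(\RGMV^2+2\VGMV)$ and the sign analysis of $(1-2s)\RGMV$ are exactly the paper's \eqref{con_on_D}--\eqref{con_on_R} with $\gamma=1$). The branch-choice worry you flag is already settled in the proof of Theorem \ref{th: optimal portfolio gamma=1}, where $(X_-,Y_-)$ is shown to be the maximizer, so no additional work is needed there.
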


\section{Empirical Study}
The aim of this section is to show how the above results can be applied in practice. We will analyze in detail as well whether the demanded assumptions are fulfilled in practice or not.

Our empirical study is based on stocks listed in the German stock index (DAX). We  {select} 17 stocks from the DAX {(ADS, ALV, BAYN, BMW, CBK, DAI, DBK, DPW, DTE, HEI, IFX, LHA, LIN,   SAP, SIE, TKA, VOW3)} and study  {their} behavior   {during} the time period from September 21, 2014 up to September 17, 2017. Our analysis is based on weekly returns. Consequently we have 157 observations for each stock.

Based on these 17 stocks we construct a number of portfolios by randomly sampling $k \in \{4,...,14\}$ assets. Thus, for each $k$ the number of different sets of assets is equal to $\binom{17}{k}$. From every such set we determine the optimal portfolios in the sense of maximizing the expected power (logarithmic) utility. The parameters $\bmu$ and $\bSigma$ are estimated for each portfolio constellation by the sample mean vector and the sample covariance matrix. Moreover, without loss of generality we set $W_0 = 1$.

\subsection{Validating of model assumption and conditions of theorems}

The findings of Theorems \ref{th: optimal portfolio weights theorem} and \ref{th: optimal portfolio gamma=1} are obtained under certain assumptions. In the following we analyze whether these conditions are reasonable for the underlying data set.

First, we validate the assumption of log-normally distributed portfolio returns. In order to test the hypothesis of log-normality, we apply the Shapiro-Wilk test to the logarithmized returns of optimal portfolios that maximize the expected power utility functions for the chosen set of assets. In Table \ref{table1} the 25\% quantiles of the $p$-values are provided for several numbers of assets $k$ within the portfolio and for several values of the relative risk aversion coefficient $\gamma$. We remind again that for each fixed $k$ the amount of optimal portfolios under consideration is $\binom {17}{k}$. Table \ref{table1} shows that more than 75\% of the $p$-values are always considerably larger than the possible nominal levels of the test, like  5\% in our case. Consequently, in at least of 75\% of all considered cases, the null hypothesis that the returns of optimal portfolios in the sense of maximizing the expected power utility are log-normally distributed cannot be rejected. For example, choosing $k=12$ and $\gamma = 5$ we receive that 75\% of $p$-values are greater than $0.26$. Moreover, with increasing number of assets the $p$-values are monotonically increasing as well.
Hence, this gives rise to the conclusion that the log-normality assumption seems to be reasonable in a number of practical situations.

\begin{table}
	\centering
	\begin{tabular}{|c|l|l|l|l|l|l|l|l|l|}
		\hline
		\multirow{2}{*}{\begin{tabular}[c]{@{}c@{}}Number of \\ assets $i$ \end{tabular}} & \multicolumn{9}{c|}{Relative risk aversion $\gamma$}                                                                                                                                                                 \\ \cline{2-10}
		& \multicolumn{1}{c|}{2} & \multicolumn{1}{c|}{3} & \multicolumn{1}{c|}{4} & \multicolumn{1}{c|}{5} & \multicolumn{1}{c|}{6} & \multicolumn{1}{c|}{7} & \multicolumn{1}{c|}{8} & \multicolumn{1}{c|}{9} & \multicolumn{1}{c|}{10} \\ \hline
		4  & 0.07 & 0.09 & 0.10 & 0.10 & 0.11 & 0.12 & 0.13 & 0.13 & 0.14 \\ \hline
		5  & 0.10 & 0.11 & 0.12 & 0.12 & 0.13 & 0.14 & 0.15 & 0.15 & 0.16 \\ \hline
		6  & 0.12 & 0.14 & 0.13 & 0.14 & 0.14 & 0.15 & 0.16 & 0.17 & 0.17 \\ \hline
		7  & 0.13 & 0.17 & 0.16 & 0.15 & 0.15 & 0.16 & 0.17 & 0.17 & 0.18 \\ \hline
		8  & 0.14 & 0.20 & 0.18 & 0.16 & 0.16 & 0.17 & 0.17 & 0.17 & 0.17 \\ \hline
		9  & 0.14 & 0.23 & 0.21 & 0.18 & 0.17 & 0.17 & 0.17 & 0.16 & 0.16 \\ \hline
		10 & 0.15 & 0.25 & 0.24 & 0.21 & 0.19 & 0.17 & 0.16 & 0.16 & 0.16 \\ \hline
		11 & 0.15 & 0.28 & 0.27 & 0.23 & 0.20 & 0.18 & 0.16 & 0.15 & 0.15 \\ \hline
		12 & 0.16 & 0.30 & 0.29 & 0.26 & 0.22 & 0.20 & 0.17 & 0.15 & 0.15 \\ \hline
		13 & 0.19 & 0.32 & 0.31 & 0.29 & 0.25 & 0.21 & 0.18 & 0.16 & 0.15 \\ \hline
		14 & 0.23 & 0.34 & 0.34 & 0.33 & 0.31 & 0.25 & 0.21 & 0.17 & 0.16 \\ \hline
	\end{tabular}
	\caption{25$\%$ quantile of the $p$-values of the log-normality test for several combinations of $k$ and $\gamma$}
	\label{table1}
\end{table}

Next, we analyze whether the condition \eqref{gamma_min} of Theorem \ref{th: optimal portfolio weights theorem} is fulfilled in the considered application. Our results are summarized in Figure \ref{fig: conditions of optimal portfolio theorem } (above). Here the percentage of all considered cases are present for several values of the number of assets $k$ in the portfolio and of the relative risk aversion $\gamma$ when the condition $\{ \gamma\geq \gamma_{min}\}$ is not fulfilled. The dashed vertical lines represent the values of $\gamma_{min}(k)$ for every number of assets $k$. We observe that with decreasing $k$ and increasing $\gamma$ this condition is almost always satisfied and the number of inappropriate cases goes to zero. For example, the probability of having $\gamma<\gamma_{min}$ for $k$ assets is virtually zero if $\gamma>\gamma_{min}(k)$ for all considered values of $k\in\{4, 5, \ldots, 14\}$. This is in line with the theoretical findings of Theorem \ref{th: optimal portfolio weights theorem}. Moreover, because $\gamma_{min}(k)$ is always smaller than one, we can conclude due to Theorem \ref{th: optimal portfolio gamma=1} that the solution for the logarithmic portfolios exists as well. Hence, the condition \eqref{gamma_min} is realistic for both small or large portfolios and for the relatively large values of the risk aversion coefficient, at the same time the optimal portfolio that maximizes the expected power utility exists with a small probability if $\gamma$ is close to zero. 

Finally, we analyze the condition for the mean-variance efficiency of the optimal portfolio that maximizes the expected power utility which is stated in Theorem \ref{th: power_utility_mv-efficiency}. The results are presented in Figure \ref{fig: conditions of optimal portfolio theorem } (below). Here we proceed in the same way as above for Theorem \ref{th: optimal portfolio weights theorem}. The second expression in \eqref{con: power_utility_mv-efficiency} supplies the classical  mean-variance efficiency in sense that the 
accumulated expected portfolio return of GMV portfolio, i.e., $R_{GMV}=\frac{\1\T \bSigma\inv\bmu}{\1\T \bSigma\inv\1}=\frac{\1\T \bSigma\inv\text{E}(\mathbf{r})}{\1\T \bSigma\inv\1}+1$  must be always positive. As we can observe, the related results look the same as the upper part of Figure \ref{fig: conditions of optimal portfolio theorem }, which means that the first expression of \eqref{con: power_utility_mv-efficiency} seems to be a stronger restriction than the second one for the considered data set.

\begin{figure}
  \includegraphics[width=\textwidth]{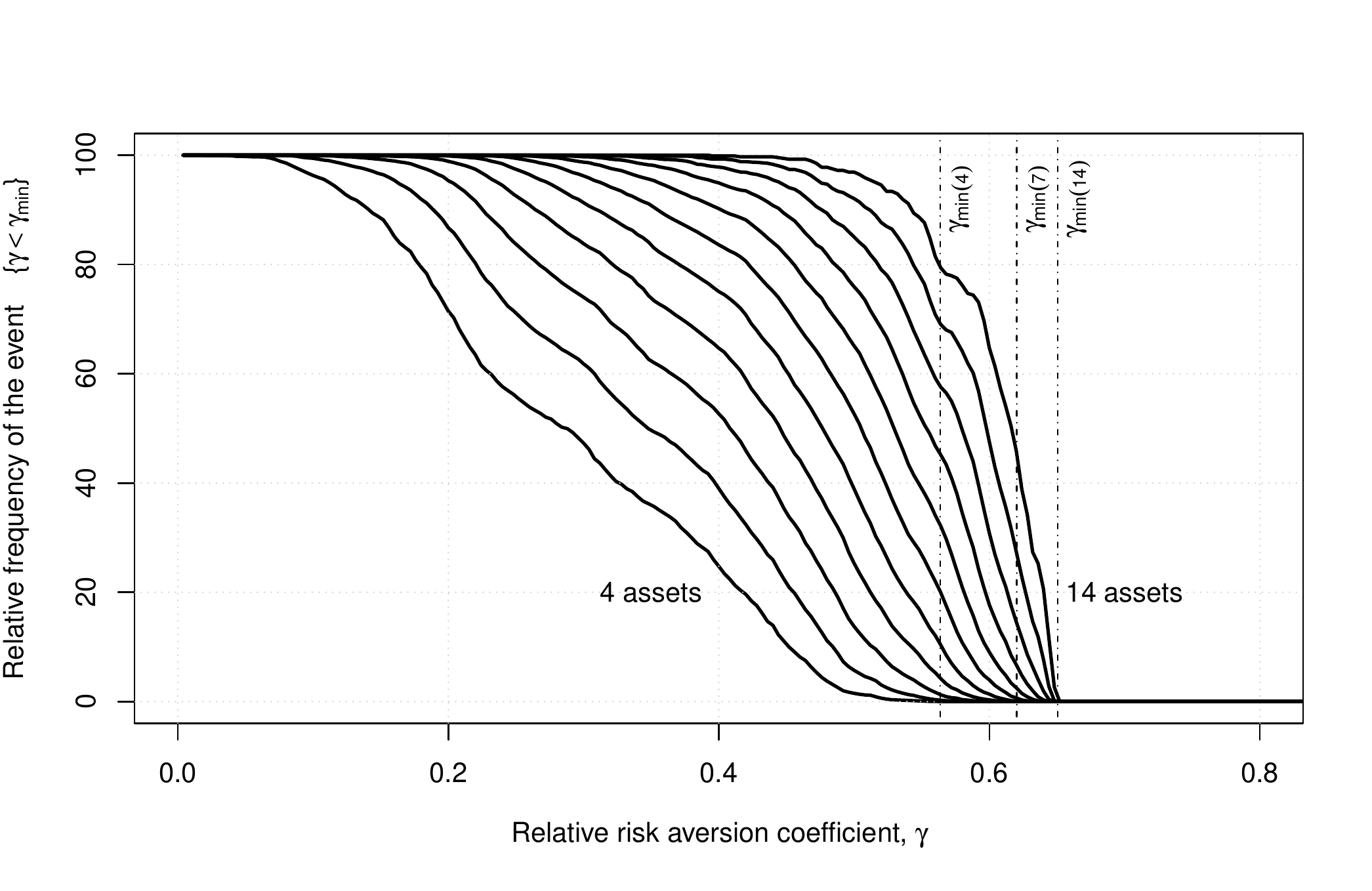} \includegraphics[width=\textwidth]{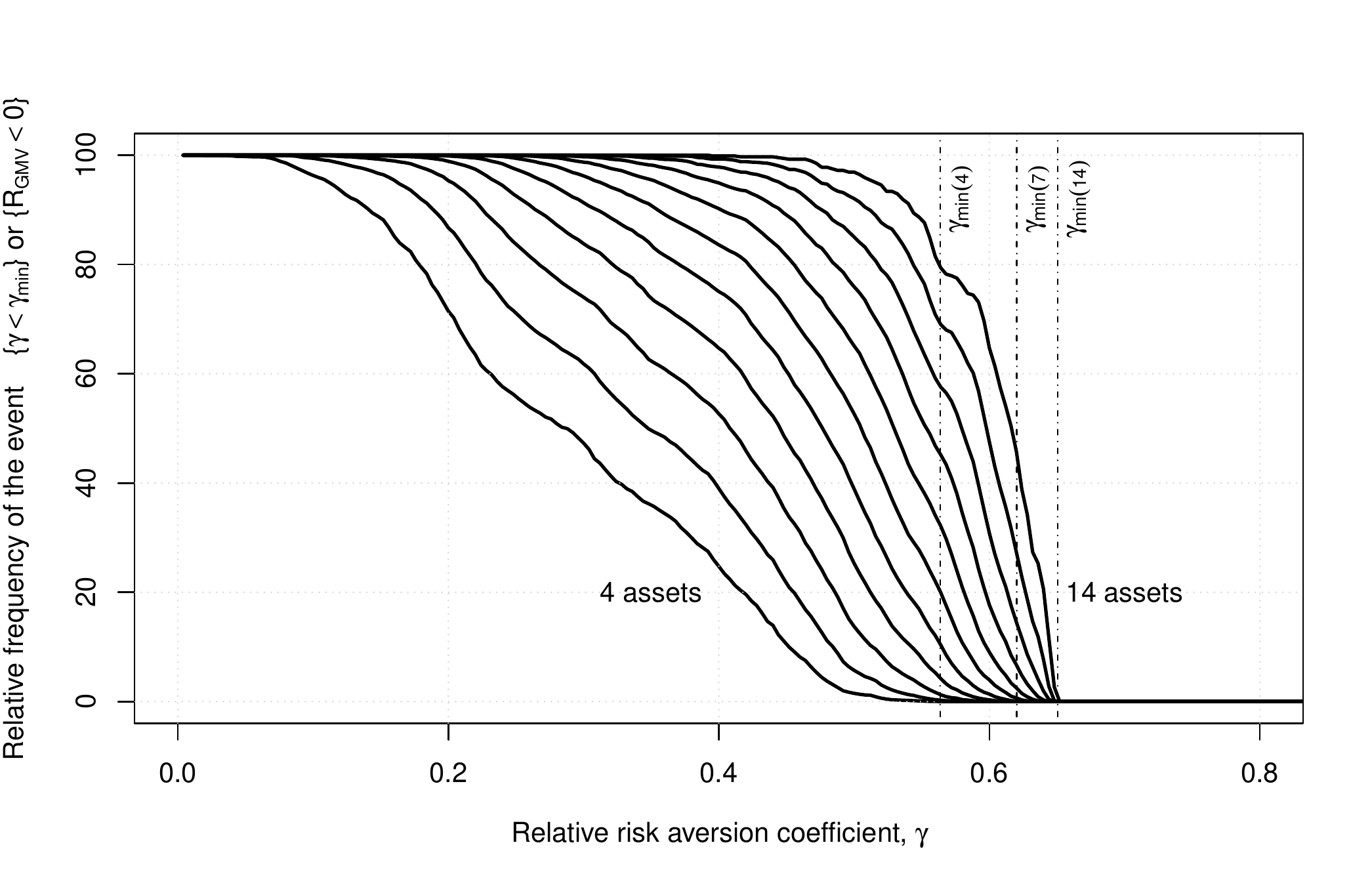}
  \caption{Relative number of cases when the condition $\gamma\geq \gamma_{min}$ of Theorem \ref{th: optimal portfolio weights theorem} (above) and the condition (\ref{con: power_utility_mv-efficiency}) of Theorem \ref{th: power_utility_mv-efficiency} (below) are not fulfilled plotted as a function of $\gamma$ for several number of assets $k \in \{4,5,...,14\}$.}
	\label{fig: conditions of optimal portfolio theorem }
\end{figure}


\subsection{Location of the optimal portfolio that maximizes the expected power utility function on the efficient frontier}

An other part of the empirical study presented by Figure \ref{fig: efficient frontiers k=4.7.10.14} is related to the location of the optimal portfolios. As it was shown in \eqref{eq:X} and \eqref{eq:Y} the mean and the variance of the optimal portfolio depends on three parameters: $\bmu, \bSigma, \gamma$. The mean vector and the covariance matrix  can be estimated and in our example we consider four possible number of assets $k\in\{4, 7, 10, 14\}$. We choose the first $k$ assets in the alphabetic order. Then we build efficient frontier for each combination of assets where we add the location of the optimal portfolios for $\gamma\in\{1, 2, 5\}$ and the Sharpe ratio portfolio. Like it was proved in Corollary \ref{cor: Decrease of variance with the increase of gamma} we receive the decrease of the variance and the mean with the increase of a risk aversion coefficient and optimal portfolio tends to the Sharpe ratio portfolio as it was shown in Corollary \ref{cor: Sharpe ratio if gamma=infinity}. In addition, we also receive the increase of both mean and variance of portfolio returns with a larger amount of assets in the portfolio.

\begin{figure}
	\includegraphics[width=0.5\textwidth]{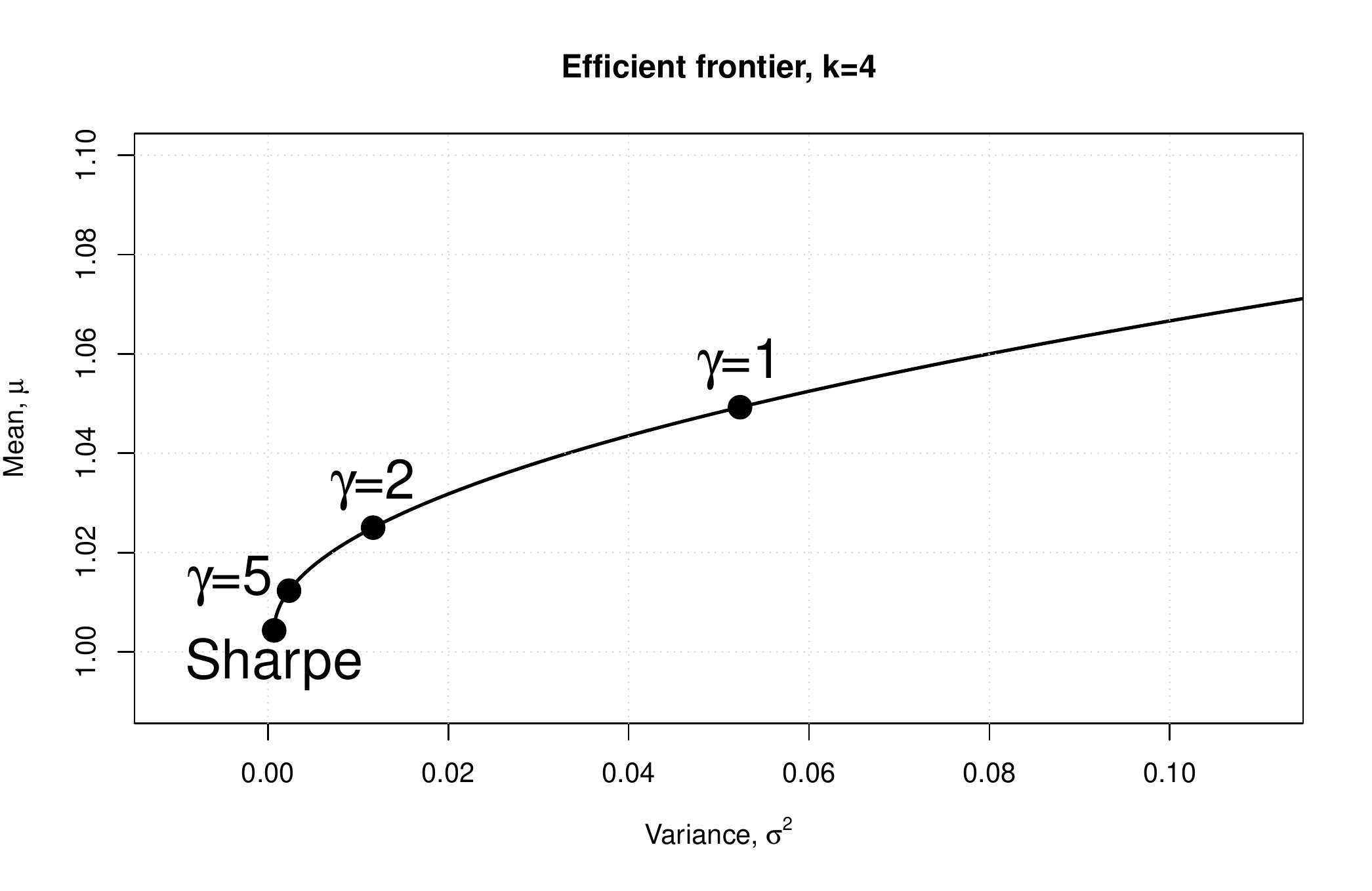}
	\includegraphics[width=0.5\textwidth]{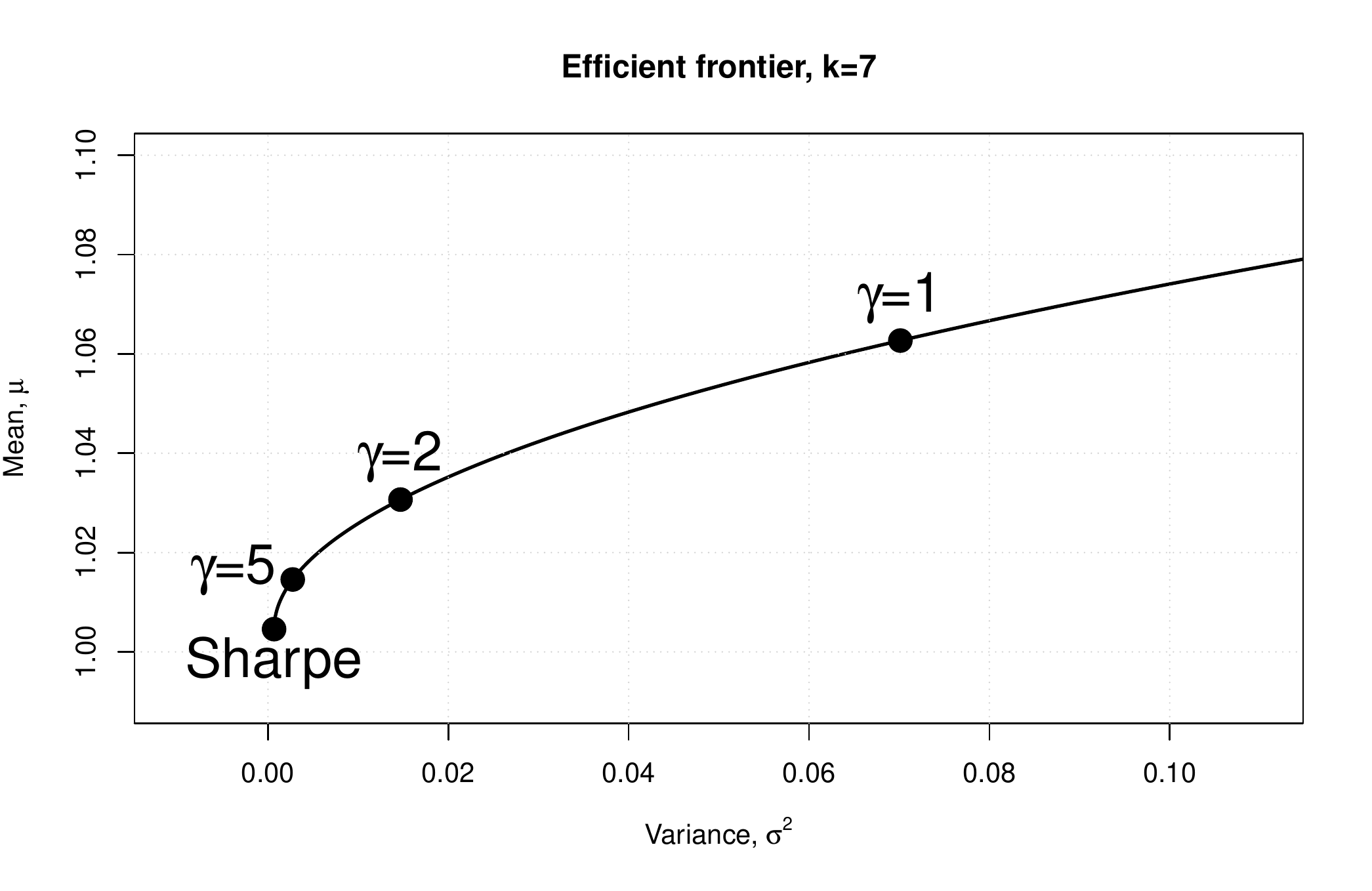}
	\includegraphics[width=0.5\textwidth]{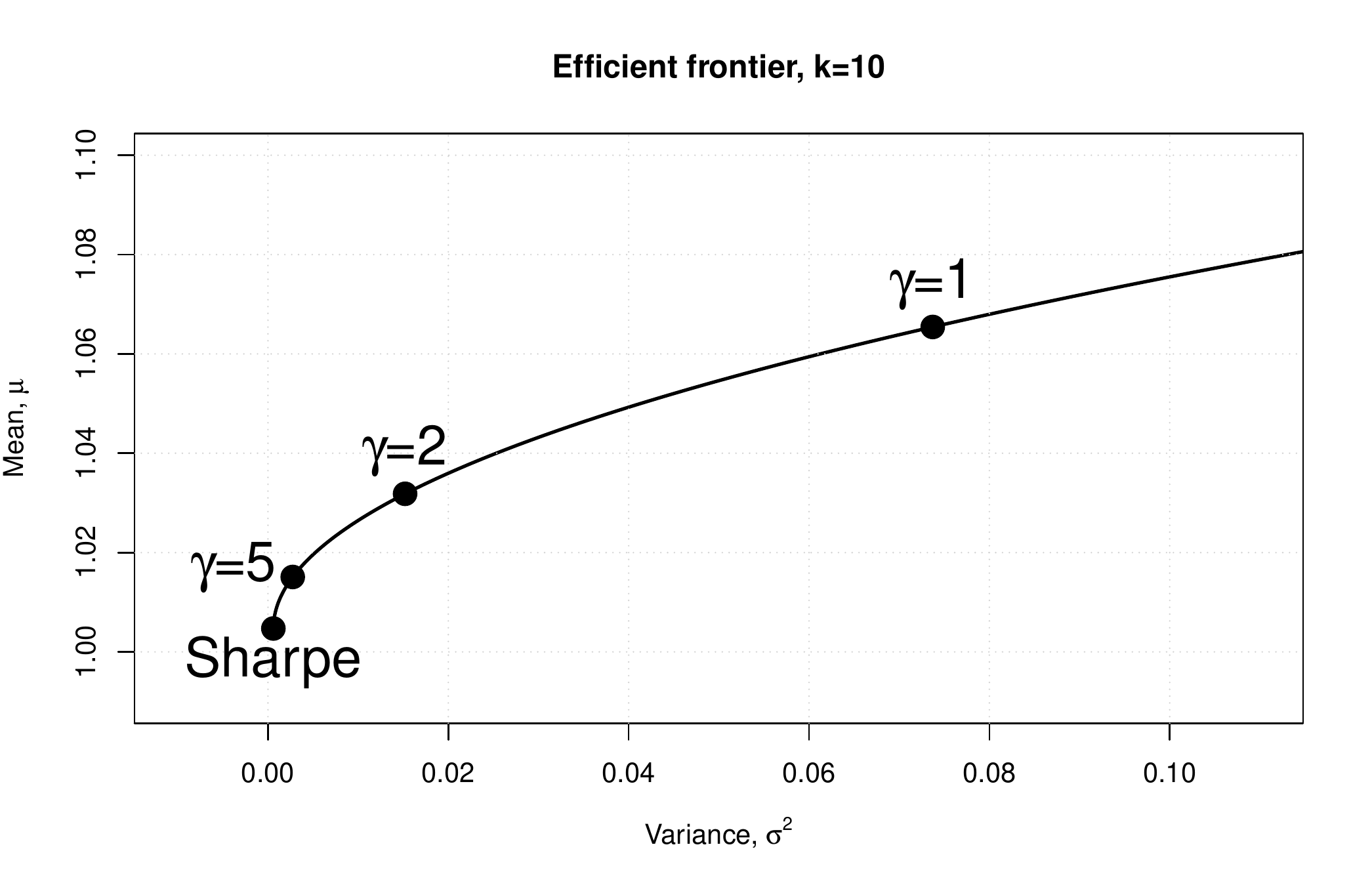}
	\includegraphics[width=0.5\textwidth]{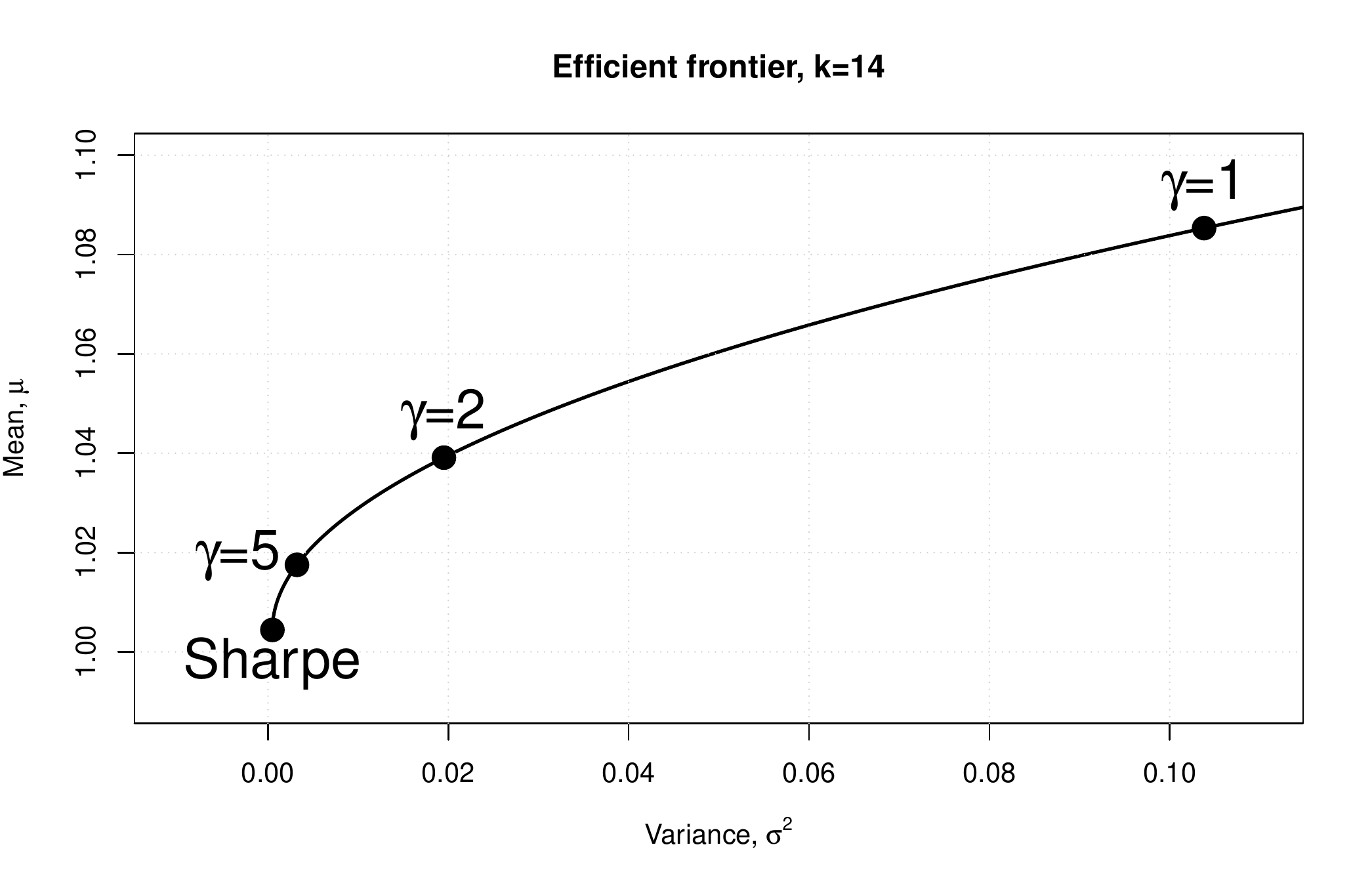}
	\caption{{Optimal portfolios for $\gamma\in\{1, 2, 5\}$ and the Sharpe ratio portfolio located on the efficient frontier for a number of assets $k\in\{4, 7, 10, 14\}$.}}
	\label{fig: efficient frontiers k=4.7.10.14}	
\end{figure}

\subsection{A comparison of several portfolio strategies}

Finally, we compare the power utilities of several portfolio strategies with each other. We examine the naive portfolio with equal weights, the Sharpe ratio portfolio, and the optimal portfolio derived in Theorem \ref{th: optimal portfolio weights theorem}. Note that the GMV portfolio was excluded from the study because it was to close to Sharpe ratio portfolio for a considered data set (see, Remark \ref{sharpe}). In Figure \ref{fig: portfolio comparison} the empirical distribution functions of the power utility obtained for each of the considered methods are given for the portfolios consisting of $k=9$ assets and for various values of $\gamma$. The figure shows that the best results are obtained for the strategy derived in Theorem \ref{th: optimal portfolio weights theorem} as expected. In addition, as another visualisation of Corollary \ref{cor: Sharpe ratio if gamma=infinity} result we have that the utility of the Sharpe ratio portfolio nearly coincides with the derived one in Theorem \ref{th: optimal portfolio weights theorem} as the coefficient of a relative risk aversion increases. Finally, a very bad performance of the equally-weighted (naive) portfolio is observed.

\begin{figure}
	\includegraphics[width=0.5\textwidth]{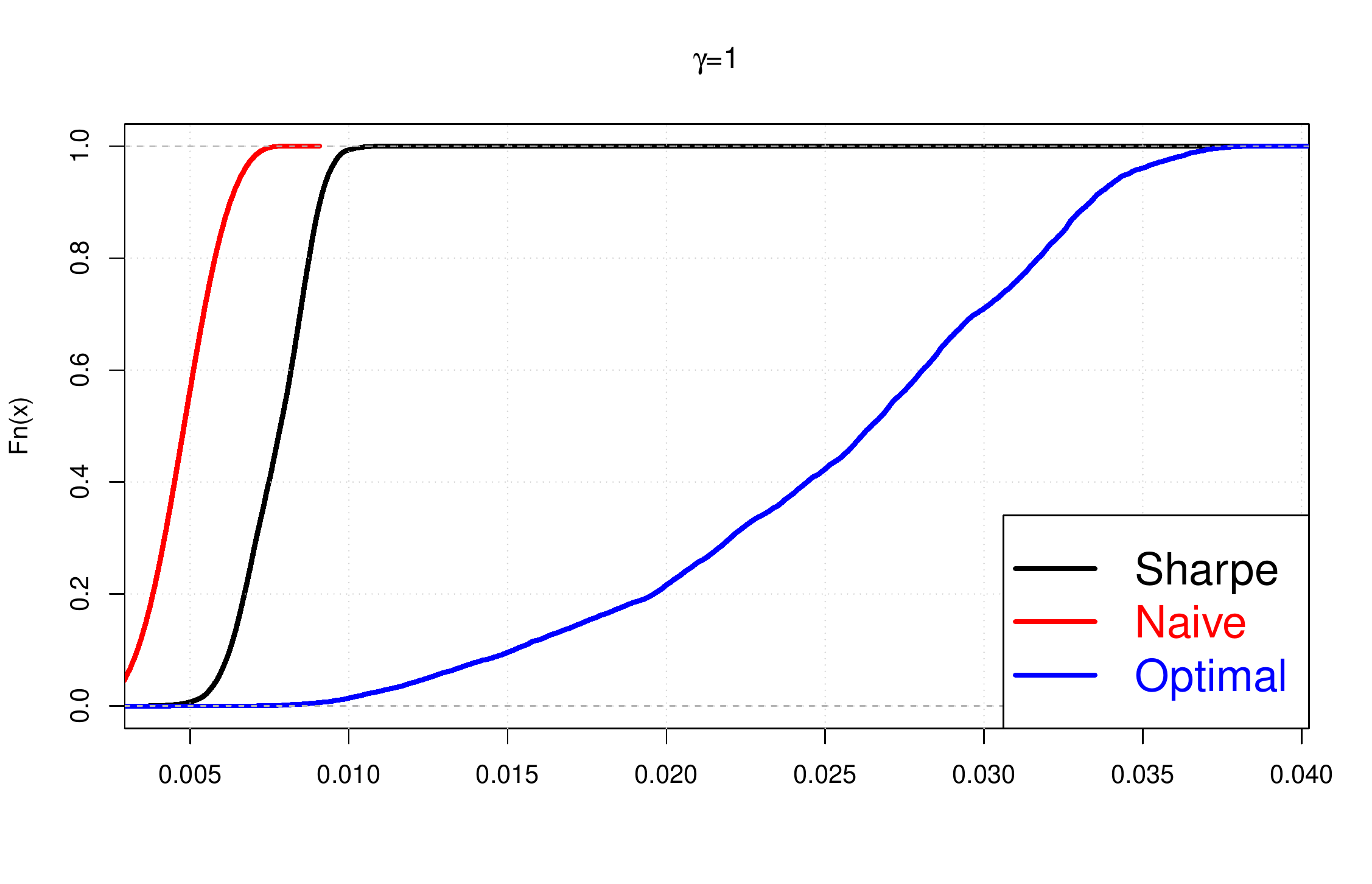}
	\includegraphics[width=0.5\textwidth]{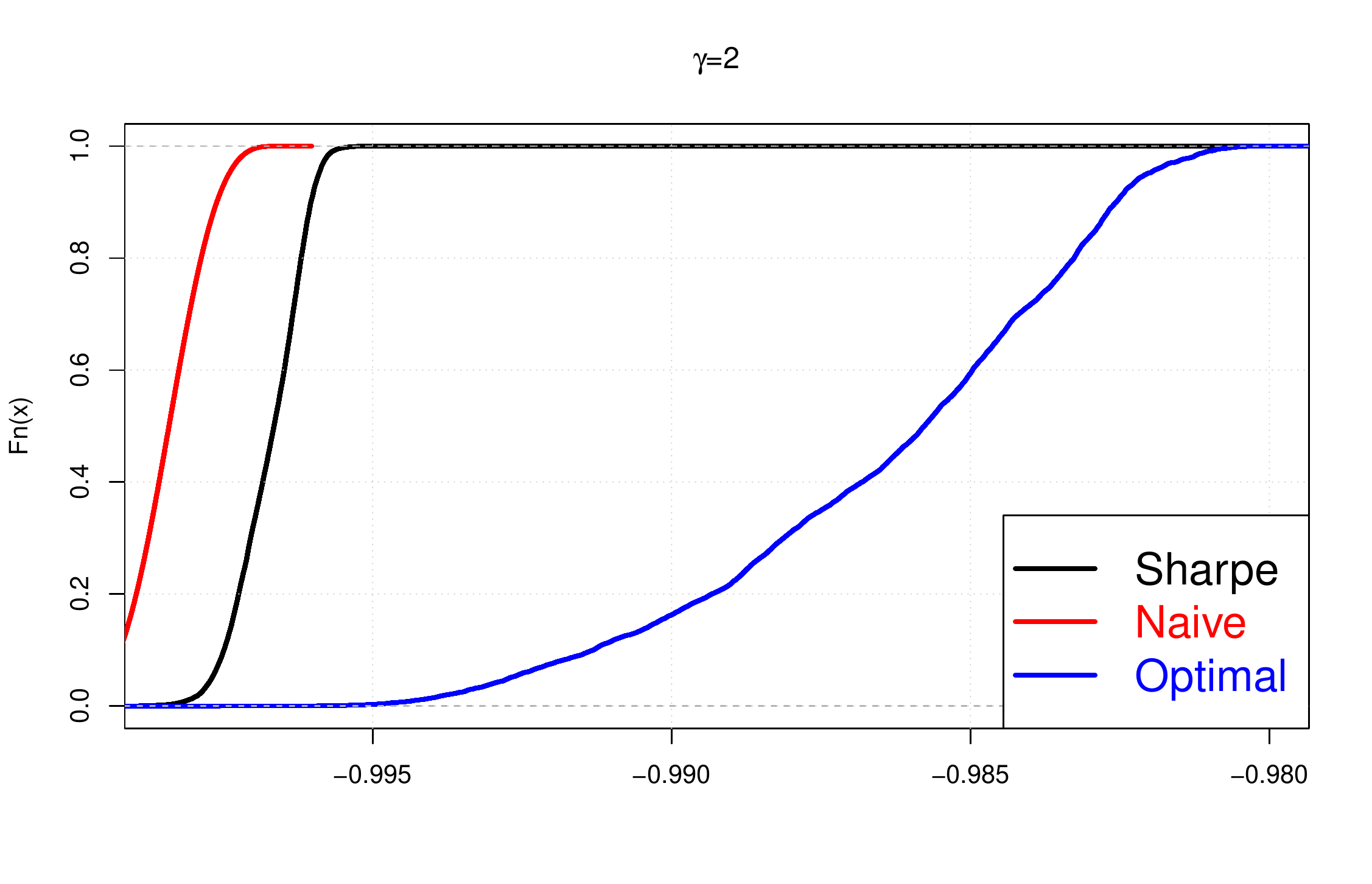}
	\includegraphics[width=0.5\textwidth]{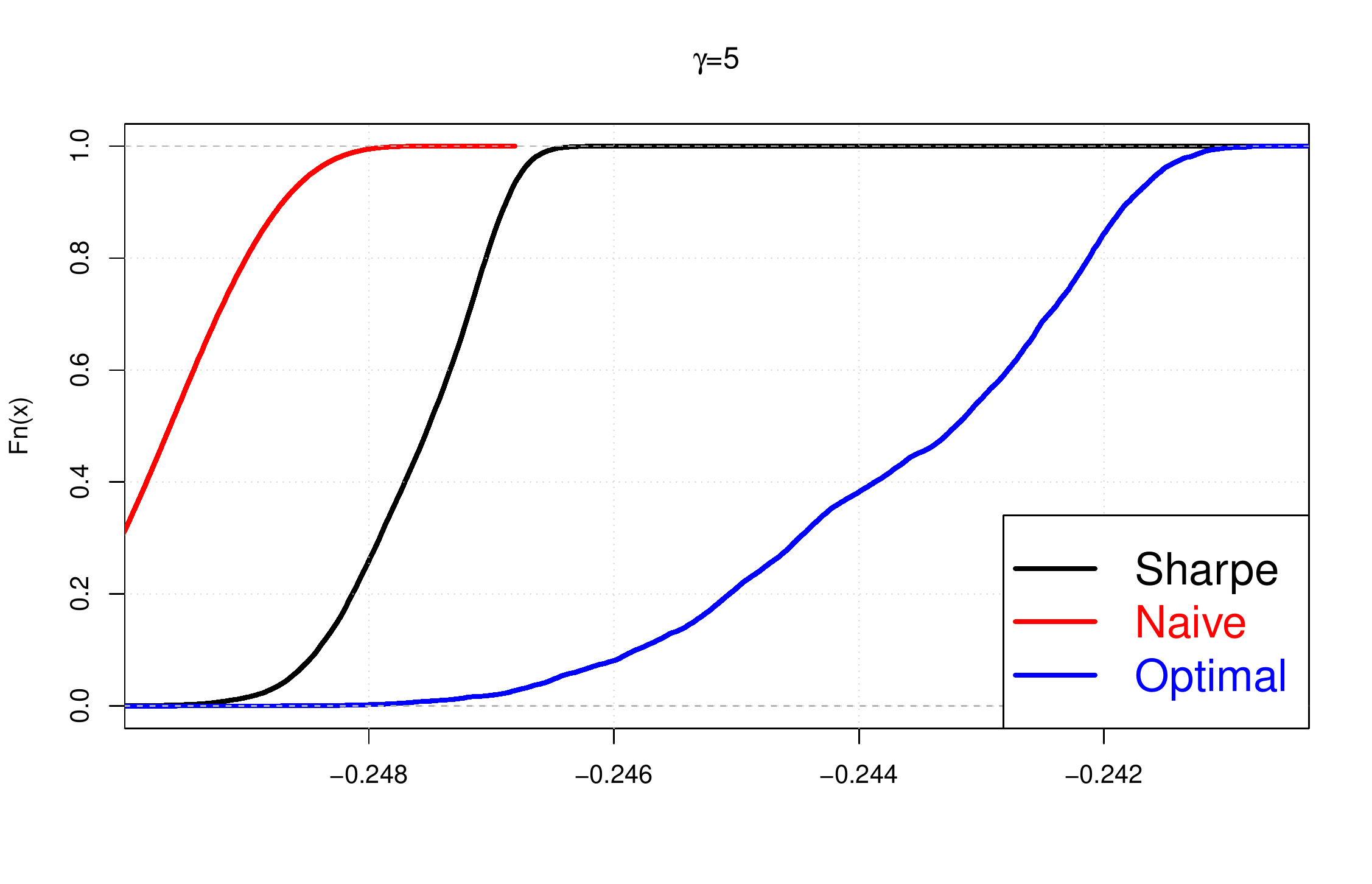}
	\includegraphics[width=0.5\textwidth]{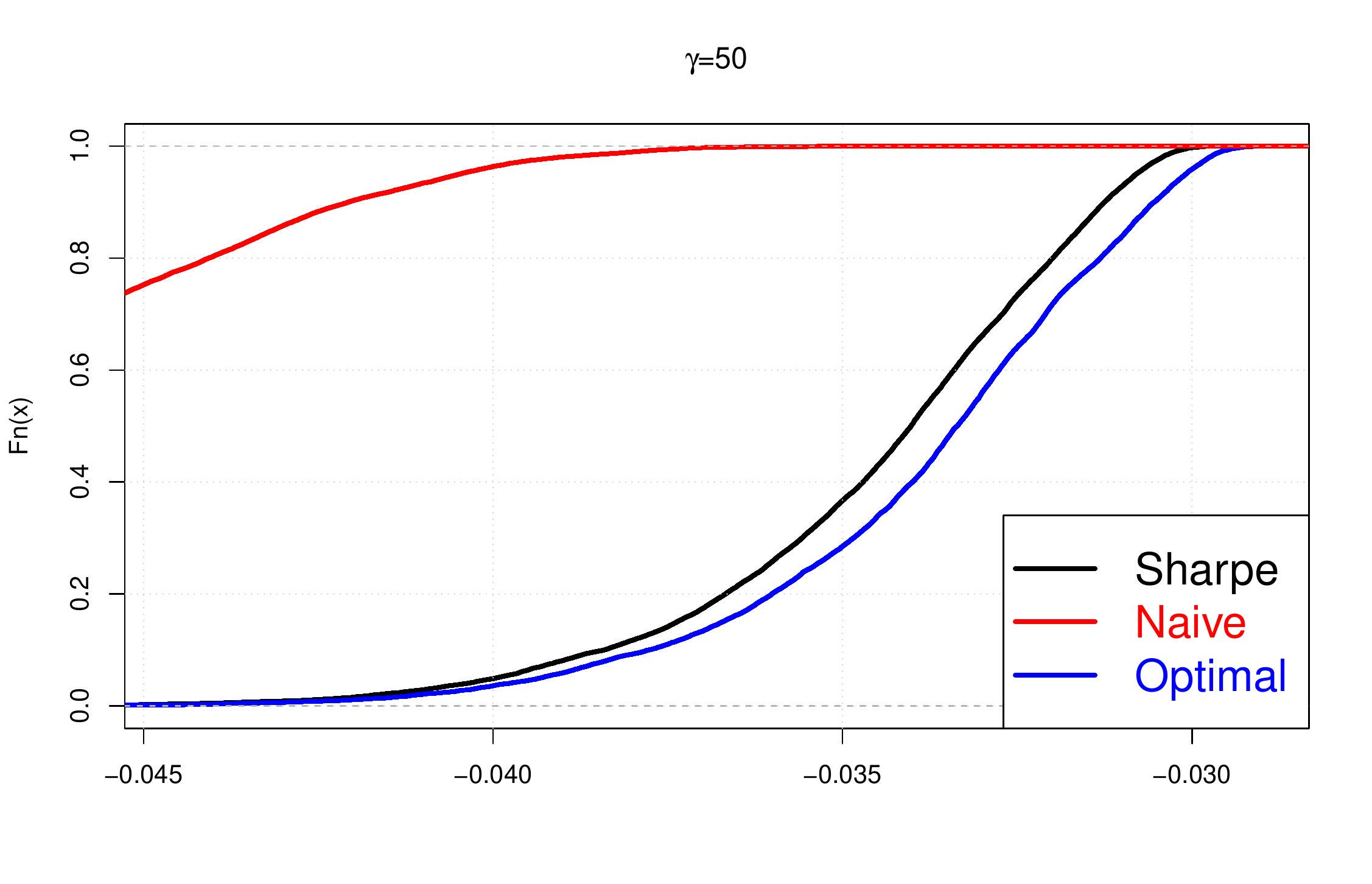}
	\caption{Empirical distribution functions of the power utility for several portfolio strategies (naive portfolio, Sharpe portfolio, optimal portfolio in the sense of maximizing the expected power utility) for $k=9$ assets.}
	\label{fig: portfolio comparison}	
\end{figure}

\section{Appendix}

\begin{proof}[Proof of Lemma \ref{lem0}]
It is easy to see that $\Psi(\mu)=0$ and the point $\mu$ is its minimum. Indeed, let us consider a point different from $\mu$ and define it as $x_0=a\mu$, $a>0$, $a\neq1$. Then then it holds
\begin{equation*}
\begin{split}
\Psi(x_0)&=\Phi\left(\frac{a\mu-\mu}{\sigma}\right)-\Phi\left(\frac{\ln a\mu-\ln\mu}{\sigma\slash\mu}\right)=\Phi\left(\frac{a-1}{\sigma\slash\mu}\right)-\Phi\left(\frac{\ln a}{\sigma\slash\mu}\right)\\
&>\Phi\left(\frac{a-1}{\sigma\slash\mu}\right)-\Phi\left(\frac{a-1}{\sigma\slash\mu}\right)=0.
\end{split}
\end{equation*}

Now we consider the upper bound of $\Psi(x)$

\begin{equation}
\frac{\partial}{\partial x}\Psi(x)=\frac{1}{\sqrt{2\pi}\sigma}e^{-\frac{(x-\mu)^2}{2\sigma^2}}-
\frac{1}{x\sqrt{2\pi}\sigma\slash\mu}e^{-\frac{(\ln x-\ln\mu)^2}{2\sigma^2\slash\mu^2}}=0
\end{equation} 
or
\begin{equation}
\left(\frac{x}{\mu}-1\right)^2=\ln^2\frac{x}{\mu}+2\frac{\sigma^2}{\mu^2}\ln\frac{x}{\mu}.
\end{equation}  
Denoting $y=\ln\frac{x}{\mu}$ we find
\begin{equation}\label{eq: extrema points of Psi}
\left(e^y-1\right)^2=y^2+2\frac{\sigma^2}{\mu^2}y.
\end{equation}  
One can see that the equation \eqref{eq: extrema points of Psi} has three roots (plotting is helpful in this case): the one is grater that zero, the second one is less than zero and already examined $y=0$. It is also notable that as soon as $\Psi(\cdot)$ is a continuously differentiable function and $\Psi(x)\rightarrow0\ \text{for}\ x\rightarrow\pm\infty$, non zero extrema points are local and global maxima.    

For $y<0$ the expression $(e^y-1)^2$ is smaller than $\left(\frac{y}{y-1}\right)^2$ and if $y<-2\frac{\sigma^2}{\mu^2}$ holds that $y^2+2\frac{\sigma^2}{\mu^2}y>\left(y+2\frac{\sigma^2}{\mu^2}\right)^2$, consequently, the solution belongs to the interval $\left(1-\frac{\sigma^2}{\mu^2}-\sqrt{\frac{\sigma^4}{\mu^4}+1},-2\frac{\sigma^2}{\mu^2}\right)$ which can be derived plugging left hand side of \eqref{eq: extrema points of Psi} equal to zero and solving the equation $\left(\frac{y}{y-1}\right)^2=\left(y+2\frac{\sigma^2}{\mu^2}\right)^2$ for a negative root. 

In case of $y>0$ it holds that $(e^y-1)^2>\left(\left(1+\frac{y}{2}\right)^2-1\right)^2$ and $y^2+2\frac{\sigma^2}{\mu^2}y<\left(y+\frac{\sigma^2}{\mu^2}\right)^2$ and one can derive that the other extrema locates between zero and $2\frac{\sigma}{\mu}$.
As the result we have two intervals for extrema points\\ $l_1=\left(e^{1-\frac{\sigma^2}{\mu^2}-\sqrt{\frac{\sigma^4}{\mu^4}+1}},e^{-2\frac{\sigma^2}{\mu^2}}\right)$ and $l_2=\left(1,e^{2\frac{\sigma}{\mu}}\right)$, so $\Psi(x)$ can be bounded as follows.
\begin{eqnarray*}\label{psi_estimate}
\sup_{x\in \mathbb{R}}|\Psi(x)|&\leq&\sup_{x\in \mathbb{R}}\int\limits^{\frac{x-\mu}{\sigma}}_{\frac{\ln x-\ln\mu}{\sigma\slash\mu}}\frac{1}{\sqrt{2\pi}}|e^{-\frac{t^2}{2}}|dt\leq\sup_{x\in \mathbb{R}}\int\limits^{\frac{x-\mu}{\sigma}}_{\frac{\ln x-\ln\mu}{\sigma\slash\mu}}\frac{1}{\sqrt{2\pi}}dt\nonumber\\
&=&\frac{1}{\sqrt{2\pi}}\sup_{x\in \mathbb{R}}\frac{\frac{x}{\mu}-1-\ln\frac{x}{\mu}}{\sigma\slash\mu}=\frac{1}{\sqrt{2\pi}}\sup_{\frac{x}{\mu}\in l_1\cup l_2}\frac{\frac{x}{\mu}-1-\ln\frac{x}{\mu}}{\sigma\slash\mu}\nonumber\\
                               &=&\frac{1}{\sqrt{2\pi}}\max\left[\frac{e^{1-\frac{\sigma^2}{\mu^2}-\sqrt{\frac{\sigma^4}{\mu^4}+1}}-2+\frac{\sigma^2}{\mu^2}+\sqrt{\frac{\sigma^4}{\mu^4}+1}}{\frac{\sigma}{\mu}};\ \frac{e^{2\frac{\sigma}{\mu}}-1-2\frac{\sigma}{\mu}}{\frac{\sigma}{\mu}}\right]\,.
\end{eqnarray*}
Now taking the limit $\sigma/\mu\to 0$ we get
\begin{eqnarray}
  \sup_{x\in \mathbb{R}}|\Psi(x)|=O\left(\frac{\sigma}{\mu} \right)\,,
\end{eqnarray}
where the last equality folows from the fact that $e^{x}=1+x+O(x^2)$ as $x\to0$.
\end{proof}  

\begin{lemma}\label{lem1}
Assume that the portfolio return $\w\T  \bR$ at the end of the investment period is log-normally distributed, i.e. $\w\T  \bR\sim \ln \, N(\alpha, \beta^2)$. Let
      \begin{eqnarray}\label{eq:E,V}
	E & := & E\left[\w\T  \bR\right]=\w\T \bmu , \\
	V & :=& Var\left[\w\T  \bR\right]=\w\T \bSigma\w.\nonumber
	\end{eqnarray}
Then
	\begin{eqnarray}\label{eq: alpha, betha}
\beta ^2  =  \ln\left(\frac{V}{E^2}+1\right) \quad \text{and} \quad
\alpha =  \ln\frac{E^2}{(V+E^2)^{\frac{1}{2}}}  . \nonumber
	\end{eqnarray}
\end{lemma}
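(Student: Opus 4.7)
The plan is to read off $\alpha$ and $\beta^2$ directly from the moment generating formula \eqref{eq: moments of log-normal distribution} applied at $\tau=1$ and $\tau=2$. Since $\w\T\bR\sim\ln\mathcal{N}(\alpha,\beta^2)$, the formula yields
\[
E = E[\w\T\bR]=\exp\!\left(\alpha+\tfrac{1}{2}\beta^2\right),\qquad E[(\w\T\bR)^2]=\exp\!\left(2\alpha+2\beta^2\right),
\]
so $V=E[(\w\T\bR)^2]-E^2=\exp(2\alpha+\beta^2)\bigl(\exp(\beta^2)-1\bigr)=E^2\bigl(\exp(\beta^2)-1\bigr)$.

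From this last identity I would solve for $\beta^2$: dividing by $E^2$ (which is positive since $E=\w\T\bmu$ is the mean of a strictly positive log-normal random variable) and taking logarithms gives $\beta^2=\ln(1+V/E^2)$, which is the first claimed formula. Then I would substitute back into $E=\exp(\alpha+\beta^2/2)$ to isolate $\alpha$:
\[
\alpha=\ln E-\tfrac{1}{2}\beta^2=\ln E-\tfrac{1}{2}\ln\!\left(\tfrac{V+E^2}{E^2}\right)=\ln E^2-\tfrac{1}{2}\ln(V+E^2)=\ln\!\frac{E^2}{(V+E^2)^{1/2}},
\]
which is the second claimed formula.

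There is no real obstacle here: the proof is a two-line consequence of the log-normal moment formula \eqref{eq: moments of log-normal distribution} combined with the definitions $E=\w\T\bmu$ and $V=\w\T\bSigma\w$ recorded in \eqref{eq:E,V}. The only point worth noting is that the transformation $(\alpha,\beta^2)\mapsto(E,V)$ is a bijection on the relevant domain ($E>0$, $V>0$), so the inversion performed above is unambiguous.
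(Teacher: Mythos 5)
Your proof is correct and follows essentially the same route as the paper: both apply the log-normal moment formula \eqref{eq: moments of log-normal distribution} to express $E$ and $V$ in terms of $\alpha$ and $\beta^2$ and then invert the two equations. The paper merely states that solving the system yields the result, whereas you carry out the (elementary) algebra explicitly; there is no substantive difference.
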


\begin{proof}[Proof of Lemma \ref{lem1}]
In using \eqref{eq: moments of log-normal distribution}, we get
 \begin{eqnarray}\label{eq:E,V_proof}
	 E\left[\w\T  \bR\right]&=&e^{\alpha+\frac{1}{2}\beta^2} , \\
	Var\left[\w\T  \bR\right]&=&e^{2\alpha+\beta^2}(e^{\beta^2}-1) .\nonumber
	\end{eqnarray}
Substituting \eqref{eq:E,V} into \eqref{eq:E,V_proof} and solving these two equations with respect to $\alpha$ and $\beta^2$ leads to the statement of the lemma.
\end{proof}

\begin{proof}[Proof of Theorem \ref{th: optimal portfolio weights theorem}]
	For the power utility function, we get that 	
	\begin{eqnarray}\nonumber
	&&E\left[U(W )\right] =  \frac{{W_0^{1-\gamma}}}{1-\gamma} \exp \left[\alpha(1-\gamma) + \frac{1}{2}\beta^2(1-\gamma)^2\right]\\
	&=&\frac{{W_0^{1-\gamma}}}{1-\gamma}\exp\left[(1-\gamma)\ln\frac{E^2}{(V+E^2)^{\frac{1}{2}}}+\frac{1}{2}(1-\gamma)^2\ln\frac{V+E^2}{E^2}\right] \nonumber\\
	&=&\frac{{W_0^{1-\gamma}}}{1-\gamma}\exp\left[(1-\gamma^2)\ln E+\frac{1}{2}(\gamma^2-\gamma)\ln(V+E^2)\right]\nonumber\\
	&=&\frac{{W_0^{1-\gamma}}}{1-\gamma}\exp\left[(1-\gamma^2)\ln \w\T {\bmu} +\frac{1}{2}(\gamma^2-\gamma)\ln\left(\w\T \bSigma \w+(\w\T {\bmu} )^2\right)\right].\label{eq:normal_expectation}
	\end{eqnarray}
	In order to maximize the expected utility we need to find the maximum of the expression in the exponent of (\ref{eq:normal_expectation}) for $\gamma<1$ or its minimum for $\gamma>1$ under the side condition $\w\T  \boldsymbol{1} = 1$. {In both cases the method of Lagrange multipliers is used with the Lagrange function given by}
	\begin{equation}\label{eq:maximization_problem}
	(1-\gamma^2)\ln \w\T {\bmu} +\frac{1}{2}(\gamma^2-\gamma)\ln\left(\w\T \bSigma \w+(\w\T {\bmu} )^2\right) + \lambda (\w\T  \boldsymbol{1} - 1)	.
	\end{equation}
	Partial derivation leads to
	\begin{align}\label{eq: lagrange derivetive}
	&\frac{\partial}{\partial \w}L=(1-\gamma^2)\frac{{\bmu} } {\w\T {\bmu} }+(\gamma^2-\gamma)\frac{\bSigma \w+(\w\T {\bmu} ){\bmu} }{\w\T \bSigma \w+(\w\T {\bmu} )^2}+\lambda\1=\mathbf{0} ,
	\\
	&\frac{\partial}{\partial \lambda}L=\w\T \1-1=0 .	
	\end{align}
	
	{Let}
	\begin{equation}\label{XY}
	X:=\w\T {\bmu} ,\quad \text{and} \quad
	Y:=\w\T \bSigma \w+(\w\T {\bmu} )^2.
	\end{equation}
	Then, the multiplication of \eqref{eq: lagrange derivetive} by $\w\T $ leads to $\lambda = {\gamma-1}$. Furthermore, multiplying \eqref{eq: lagrange derivetive} by ${\bmu} \T \bSigma \inv/{\1\T \bSigma \inv\1}$ and $\1\T \bSigma \inv/{\1\T \bSigma \inv\1}$ and using the following equalities
	\begin{equation*}
	\frac{1}{\1\T \bSigma \inv\1}=\VGMV ,~~
	\frac{\1\T \bSigma \inv{\bmu}}{\1\T \bSigma \inv\1}=\RGMV ,~~ \text{and}~~
	\frac{{\bmu}\T \bSigma \inv{\bmu}}{\1\T \bSigma \inv\1}=\RGMV ^2+s \VGMV 
	\end{equation*}
	with $\RGMV $, $\VGMV $, and $s$ as in \eqref{eq: eff set const: R, V, s}, we get
		\begin{eqnarray}\label{eq:system_of_unknowns}
		&-(\gamma+1)\frac{\RGMV ^2+s \VGMV } {X} +\gamma\frac{X(\VGMV +\RGMV ^2+s \VGMV )}{Y}+\RGMV =0 , \\
		&-(\gamma+1)\frac{\RGMV }{X}+\gamma\frac{\VGMV + X \RGMV }{Y}+1=0 .\label{eq:system_of_unknowns2}	
		\end{eqnarray}
	Next, we multiply (\ref{eq:system_of_unknowns}) {by $\RGMV $, multiply (\ref{eq:system_of_unknowns2}) by $\RGMV ^2+s \VGMV  $, and subtract the first equation from the second one to get
		\begin{equation}\label{eq: new_system}	
		\gamma\frac{X \RGMV  -\RGMV ^2-s\VGMV }{Y}=s
		\end{equation}
		and, consequently,
		\begin{equation}\label{eq:y_solution}
		\gamma\frac{1}{Y}=\frac{s}{X \RGMV  -\RGMV ^2-s\VGMV }.
		\end{equation}
		Substituting \eqref{eq:y_solution} to \eqref{eq:system_of_unknowns2} leads to
		\begin{eqnarray*}\label{eq: new_system2}
			-(\gamma+1)\frac{\RGMV }{X}+\frac{s(\VGMV + X \RGMV )}{X \RGMV  -\RGMV ^2-s\VGMV }+1=0
		\end{eqnarray*}
		or equivalently to
		\begin{equation}\label{eq:X_final}
		(1+s)X^2-(\gamma+2)\RGMV  X+(\gamma+1)(\RGMV ^2+s\VGMV )=0
		\end{equation}
	}
	
	The roots of the quadratic equation \eqref{eq:X_final} are given by
	\begin{equation}\label{eq:x_1,2}
	X_{\pm}=\frac{(\gamma+2)\RGMV  \pm \sqrt{\D}}{2(1+s)},
	\end{equation}
	where
	\begin{equation*}
	\D=(\gamma+2)^2\RGMV ^2-4(\gamma+1)(1+s)(\RGMV ^2+s\VGMV ).
	\end{equation*}
	Finally, from \eqref{eq:y_solution} and \eqref{eq:x_1,2}, we get
	\begin{equation}\label{eq:y_solution1}
	Y_{\pm}=\frac{\gamma}{s}(X_{\pm} \RGMV  -\RGMV ^2-s\VGMV ).
	\end{equation}
	Moreover, equation (\ref{eq:x_1,2}) shows that the quadratic equation has a solution if and only if $\D \ge 0$ which is equivalent to \eqref{gamma_min}. Indeed, taking the derivative of $\D(\gamma)$ with respect to $\gamma$ and setting it equal to zero one gets
        \begin{eqnarray*}
          \D'(\gamma)=2(\gamma+2)R^2_{GMV}-4(1+s)(R^2_{GMV}+sV_{GMV})=0\,
        \end{eqnarray*}
        with the solution $\gamma^*=\frac{2(1+s)(R^2_{GMV}+sV_{GMV})}{R^2_{GMV}}-2=2s\left(1+(1+s)\frac{V_{GMV}}{R^2_{GMV}}\right)$. One can easily check that $\D(\gamma^*)$ is negative, thus, $\gamma^*$ can not be the minimum possible value of $\gamma$ which guarantees the existence of the solution of \eqref{eq:X_final}. On the other side, the second derivative of $\D(\gamma)$ is positive, which indicates that $\gamma^*$ is the global minimum point of the parabola $\D(\gamma)$. Now from $\D(0)<0$ follows that the smallest possible value of $\gamma>0$ should be the positive solution of equation $\D(\gamma)=0$, which is exactly $\gamma_{min}$ presented in \eqref{gamma_min}.
	
	Further, inserting two solutions of $X$ and $Y$ into (\ref{eq: lagrange derivetive}), we get two equations for the possible optimal portfolio weights. Since the method of Lagrange multipliers only provides a necessary condition for the optimality in constrained problems we have to analyze the critical points in more detail to check whether a maximum or a minimum is present. We consider the right side of (\ref{eq:normal_expectation}) for $(X_+,Y_+)$ and $(X_-,Y_-)$ and prove for all $\gamma \geq \gamma_{min}$ and $\gamma \neq 1$ that
	\begin{equation}\label{ieq: maximum}
	L(\gamma, X_-, Y_-) > L(\gamma, X_+, Y_+)
	\end{equation}
	 and, thus, the maximum is attained at $(X_-,Y_-)$.
	
	
In the case of $\gamma>1$, \eqref{ieq: maximum} is equivalent to
\begin{eqnarray*}
&&(1-\gamma^2)\ln X_{+} + \frac{1}{2}(\gamma^2-\gamma)\ln Y_{+}-(1-\gamma^2)\ln X_{-} + \frac{1}{2}(\gamma^2-\gamma)\ln Y_{-}\\
&=&(1-\gamma^2)\ln \frac{X_{+}}{X_{-}}+\frac{1}{2}(\gamma^2-\gamma)\ln\frac{Y_{+}}{Y_{-}}>0
\end{eqnarray*}
or

\vspace{-1cm}

\begin{eqnarray}\label{ieq:final}
(1+\gamma)\ln \frac{X_{-}}{X_{+}}+\frac{1}{2}\gamma\ln\frac{Y_{+}}{Y_{-}}>0
\end{eqnarray}

Similarly, for $0<\gamma<1$ we get
\begin{eqnarray*}
&&(1-\gamma^2)\ln X_{+} + \frac{1}{2}(\gamma^2-\gamma)\ln Y_{+}-(1-\gamma^2)\ln X_{-} + \frac{1}{2}(\gamma^2-\gamma)\ln Y_{-}\\
&=&(1-\gamma^2)\ln \frac{X_{+}}{X_{-}}+\frac{1}{2}(\gamma^2-\gamma)\ln\frac{Y_{+}}{Y_{-}}<0
\end{eqnarray*}
which coincides with \eqref{ieq:final}. 

Consequently, in order to prove \eqref{ieq: maximum}, it is sufficient to show \eqref{ieq:final} for $\gamma>\gamma_{min}$ such that $\gamma \neq1$. Or, equivalently, we have to show that for all $\gamma>\gamma_{min}$ the following inequality holds
	\begin{equation}\label{ineq: gamma>1 proof}	
	-\frac{\gamma+1}{\gamma}\ln\frac{X_{+}}{X_{-}}+\frac{1}{2}\ln\frac{Y_{+}}{Y_{-}} > 0.
	\end{equation}	
	First, we prove that the derivative of the function on the left side of (\ref{ineq: gamma>1 proof}) with respect to $\gamma$ is positive. It holds that
	
	\begin{equation*}
	\begin{split}
	&\frac{\partial}{\partial \gamma}\left(-\frac{\gamma+1}{\gamma}\ln\frac{X_{+}}{X_{-}}+\frac{1}{2}\ln\frac{Y_{+}}{Y_{-}}\right)=\frac{1}{\gamma^2}\ln\frac{X_+}{X_-}\\
	&-\frac{\gamma+1}{\gamma}\frac{X_-}{X_+}\frac{\partial}{\partial \gamma}\left(\frac{X_+}{X_-}\right)+\frac{1}{2}\frac{Y_-}{Y_+}\frac{\partial}{\partial \gamma}\left(\frac{Y_+}{Y_-}\right),
	\end{split}
	\end{equation*}
	Now we calculate $\frac{\partial}{\partial \gamma}\left(\frac{X_+}{X_-}\right)$ and $\frac{\partial}{\partial \gamma}\left(\frac{Y_+}{Y_-}\right)$. It holds
	\begin{eqnarray}
          &&\frac{\partial}{\partial\gamma}\left(\frac{X_+}{X_-}\right)=\frac{\partial}{\partial \gamma}\left(\frac{(\gamma+2) \RGMV  +\sqrt{\D}}{(\gamma+2) \RGMV -\sqrt{\D}}\right)\nonumber\\
	&=&\frac{\left( \RGMV +\frac{\partial}{\partial\gamma}\left(\sqrt{\D}\right)\right)\left((\gamma+2) \RGMV -\sqrt{\D}\right)}{\left((\gamma+2) \RGMV -\sqrt{\D}\right)^2}\nonumber\\
	&-&\frac{\left( \RGMV -\frac{\partial}{\partial\gamma}\left(\sqrt{\D}\right)\right)\left((\gamma+2) \RGMV +\sqrt{\D}\right)}{\left((\gamma+2) \RGMV -\sqrt{\D}\right)^2}\nonumber\\
	&=&2\frac{\left(\gamma+2\right) \RGMV \frac{\partial}{\partial\gamma}\left(\sqrt{\D}\right)- \RGMV \sqrt{\D}}{\left((\gamma+2) \RGMV -\sqrt{\D}\right)^2}\nonumber\\
	&=&2\frac{ \RGMV }{\sqrt{\D}}\frac{\left(\gamma+2\right)\frac{\frac{\partial}{\partial\gamma}\D}{2}-\D}{\left((\gamma+2) \RGMV -\sqrt{\D}\right)^2}=\frac{ \RGMV }{\sqrt{\D}}\frac{4\gamma(1+s)\left( \RGMV ^2+s\VGMV  \right)}{\left((\gamma+2) \RGMV -\sqrt{\D}\right)^2}\nonumber\\
	&=&\frac{\gamma}{\gamma+1}\frac{ \RGMV }{\sqrt{\D}}\frac{(\gamma+2) \RGMV +\sqrt{\D}}{(\gamma+2) \RGMV -\sqrt{\D}}=\frac{\gamma}{\gamma+1}\frac{ \RGMV }{\sqrt{\D}}\frac{X_+}{X_-}\label{Xplus_Xminus},
	\end{eqnarray}
	where second last equality follows from $(1+s)(R^2_{GMV}+sV_{GMV})=\frac{(\gamma+2)^2R^2_{GMV}-\D}{4(1+\gamma)}$.  In order to proceed to the derivative of $\frac{Y_+}{Y_-}$ we first note that
          \begin{eqnarray*}
            \frac{Y_+}{Y_-}&=&\frac{((\gamma+2) \RGMV +\sqrt{\D})R_{GMV}-2(1+s)(R^2_{GMV}+sV_{GMV})}{((\gamma+2) \RGMV -\sqrt{\D})R_{GMV}-2(1+s)(R^2_{GMV}+sV_{GMV})}\\
            &=&\frac{\left((\gamma+2) \RGMV +\sqrt{\D}\right)\left(\gamma \RGMV +\sqrt{\D}\right)}{\left((\gamma+2) \RGMV -\sqrt{\D}\right)\left(\gamma \RGMV -\sqrt{\D}\right)}=\frac{X_+\left(\gamma \RGMV +\sqrt{\D}\right)}{X_-\left(\gamma \RGMV -\sqrt{\D}\right)}\,.
          \end{eqnarray*}
      Thus, taking the derivative of $\frac{Y_+}{Y_-}$ leads to 
	\begin{equation*}
	\begin{split}
          &\frac{\partial}{\partial \gamma}\left(\frac{Y_+}{Y_-}\right)
          =\frac{\gamma \RGMV +\sqrt{\D}}{\gamma \RGMV -\sqrt{\D}}\frac{\partial}{\partial\gamma}\left( \frac{X_+}{X_-}\right)+\frac{X_+}{X_-}\frac{\partial}{\partial\gamma}\left(\frac{\gamma \RGMV +\sqrt{\D}}{\gamma \RGMV -\sqrt{\D}}\right)\\
          &= \frac{\gamma \RGMV +\sqrt{\D}}{\gamma \RGMV -\sqrt{\D}}\frac{\gamma}{\gamma+1}\frac{ \RGMV }{\sqrt{\D}}\frac{X_+}{X_-} +\frac{X_+}{X_-}\frac{\partial}{\partial\gamma}\left(\frac{\gamma \RGMV +\sqrt{\D}}{\gamma \RGMV -\sqrt{\D}}\right)\\
          &= \frac{\gamma}{\gamma+1}\frac{ \RGMV }{\sqrt{\D}}\frac{Y_+}{Y_-}+\frac{X_+}{X_-}\frac{\partial}{\partial\gamma}\left(\frac{\gamma \RGMV +\sqrt{\D}}{\gamma \RGMV -\sqrt{\D}}\right)\,.
	\end{split}
        	\end{equation*}
        Thus, we only need to calculate $\frac{\partial}{\partial\gamma}\left(\frac{\gamma \RGMV +\sqrt{\D}}{\gamma \RGMV -\sqrt{\D}}\right)$. It holds
	\begin{equation*}
	\begin{split}
	&\frac{\partial}{\partial \gamma}\left(\frac{\gamma \RGMV  +\sqrt{\D}}{\gamma \RGMV -\sqrt{\D}}\right)=\frac{\left( \RGMV +\frac{\partial}{\partial\gamma}\left(\sqrt{\D}\right)\right)\left(\gamma \RGMV -\sqrt{\D}\right)}{\left(\gamma \RGMV -\sqrt{\D}\right)^2}\\
	&-\frac{\left( \RGMV -\frac{\partial}{\partial\gamma}\left(\sqrt{\D}\right)\right)\left(\gamma \RGMV +\sqrt{\D}\right)}{\left(\gamma \RGMV -\sqrt{\D}\right)^2}\\
	&=2\frac{\gamma \RGMV \frac{\partial}{\partial\gamma}\left(\sqrt{\D}\right)- \RGMV \sqrt{\D}}{\left(\gamma \RGMV -\sqrt{\D}\right)^2}=2\frac{ \RGMV }{\sqrt{\D}}\frac{\gamma\frac{\frac{\partial}{\partial\gamma}\D}{2}-\D}{\left(\gamma \RGMV -\sqrt{\D}\right)^2}\\
	&=\frac{\gamma+2}{\gamma+1}\frac{\RGMV }{\sqrt{\D}}\frac{\gamma \RGMV +\sqrt{\D}}{\gamma \RGMV -\sqrt{\D}},
	\end{split}
      \end{equation*}
	which immediately implies that
	\begin{equation*}
	\begin{split}
	\frac{\partial}{\partial \gamma}\left(\frac{Y_+}{Y_-}\right)=\frac{\gamma}{\gamma+1}\frac{ \RGMV }{\sqrt{\D}}\frac{Y_+}{Y_-}+\frac{\gamma+2}{\gamma+1}\frac{\RGMV }{\sqrt{\D}}\frac{Y_+}{Y_-}=2\frac{\RGMV }{\sqrt{\D}}\frac{Y_+}{Y_-}.
	\end{split}
	\end{equation*}
	
	Putting all together it follows that for $\gamma > \gamma_{min}$
	\begin{equation*}
	\frac{\partial}{\partial \gamma}\left(-\frac{\gamma+1}{\gamma}\ln\frac{X_{+}}{X_{-}}+\frac{1}{2}\ln\frac{Y_{+}}{Y_{-}}\right)=\frac{1}{\gamma^2}\ln\frac{X_+}{X_-}>0.
	\end{equation*}
	
	

Thus, the function on the left side in \eqref{ineq: gamma>1 proof} is monotonically increasing function of $\gamma$ and, thus, attains its minimum value at $\gamma=\gamma_{min}$. So, calculating the value of function in \eqref{ineq: gamma>1 proof}  at point $\gamma=\gamma_{min}$ we get

	
	\begin{equation*}
	\ln\frac{X_{-}}{X_{+}}-\frac{\gamma}{2(\gamma+1)}\ln\frac{Y_{-}}{Y_{+}}\bigg|_{\gamma=\gamma_{min}}=\ln\frac{(\gamma+2) \RGMV  -\sqrt{\D}}{(\gamma+2) \RGMV +\sqrt{\D}}\bigg|_{\gamma=\gamma_{min}}=0,
	\end{equation*}
	because when $\gamma=\gamma_{min}$ it holds $\D=0$. Hence, the inequality \eqref{ineq: gamma>1 proof} holds for all $\gamma>\gamma_{min}$ and, thus, the maximum is attained at $\left(X_{-},Y_{-}\right)$	
\end{proof}

\begin{proof}[Proof of Corollary \ref{cor: parabola}]
It holds that
\begin{eqnarray*}
	\w^*&=&\bSigma \inv\left[\left(-\1+(\gamma+1)\frac{{\bmu} } {X}\right)\frac{Y}{\gamma}-X{\bmu} \right]\\
&=& \w_{GMV}-\left(\frac{Y}{\gamma}+\VGMV \right)\bSigma\inv\1+ \left(\frac{(\gamma+1)Y}{X\gamma}-X\right) \bSigma\inv{\bmu},
\end{eqnarray*}
where in using \eqref{eq:y_solution} we get
\begin{eqnarray*}
	\frac{Y}{\gamma}+\VGMV =\frac{X \RGMV  -\RGMV ^2-s\VGMV }{s} +\VGMV =\frac{X-\RGMV }{s}\RGMV 
\end{eqnarray*}
and
\begin{eqnarray*}
\frac{(\gamma+1)Y}{X\gamma}-X&=&\frac{\gamma+1}{X}\frac{X \RGMV  -\RGMV ^2-s\VGMV }{s}-X\\
&=&\frac{(\gamma+1)\RGMV X-(\gamma+1)(\RGMV ^2-s\VGMV ) -X^2 s}{Xs}\\
&=&\frac{X^2-\RGMV X}{Xs}=\frac{X-\RGMV }{s},
\end{eqnarray*}
where the third line follows from \eqref{eq:X_final}.

Hence,
\begin{eqnarray*}
\w^*&=& \w_{GMV}-\frac{X-\RGMV }{s}\RGMV \bSigma\inv\1+ \frac{X-\RGMV }{s} \bSigma\inv{\bmu}\\
&=&\w_{GMV}+\frac{X-\RGMV }{s}\boldsymbol{Q}{\bmu},
\end{eqnarray*}
\end{proof}

\begin{proof}[Proof of Theorem \ref{th: power_utility_mv-efficiency}]
  First part, i.e., $\gamma\geq\gamma_{min}$, follows obviously from the condition \eqref{gamma_min}.
  For the second one using \eqref{ineq: condition of theorem: optimal portfolio, power utility } we first observe that
\begin{eqnarray}
\D&=&(\gamma+2)^2\RGMV ^2-4(\gamma+1)(1+s)(\RGMV ^2+s\VGMV )\nonumber\\
&=&\left((\gamma+2)^2-4(\gamma+1)(1+s)\right)\RGMV ^2-4(\gamma+1)(1+s)s\VGMV  \nonumber\\
&=&\left((\gamma-2s)^2-4s(1+s)\right)\RGMV ^2-4(\gamma+1)(1+s)s\VGMV \nonumber\\
&=&(\gamma-2s)^2 \RGMV ^2-4(1+s)s(\RGMV ^2+(\gamma+1)\VGMV )\geq0\label{con_on_D}.
\end{eqnarray}

Next, in order to prove the second inequality in \eqref{con: power_utility_mv-efficiency}, we only have to find in which case
\begin{equation}\label{con}
X-\RGMV >0.
\end{equation}
Inserting $X$ from \eqref{eq:X} in the above equation \eqref{con} we get
\begin{eqnarray}
\frac{(\gamma-2s)\RGMV -\sqrt{\D}}{2(1+s)}> 0\label{con_on_R}.
\end{eqnarray}
So, because of \eqref{con_on_D} we have $(\gamma-2s)^2R^2_{GMV}\geq \D$ and together with $\gamma\geq\gamma_{min}>2s$ and \eqref{con_on_R} it implies that \eqref{con} is satisfied $\iff$ $\RGMV >0$.
\end{proof}

\begin{proof}[Proof of Corollary \ref{cor: Sharpe ratio if gamma=infinity}]
For the proof of the corollary, we only have to compute the following limit
\begin{eqnarray*}
\lim\limits_{\gamma\rightarrow\infty} ((\gamma+2)\RGMV  - \sqrt{\D})
&=&\lim\limits_{\gamma\rightarrow\infty} \frac{(\gamma+2)^2\RGMV ^2 - D}{(\gamma+2)\RGMV  + \sqrt{\D}}\\
&=&\lim\limits_{\gamma\rightarrow\infty} \frac{4(\gamma+1)(1+s)(\RGMV ^2+s\VGMV )}{(\gamma+2)\RGMV  + \sqrt{\D}}\\
&=&\frac{2(1+s)(\RGMV ^2+s\VGMV )}{\RGMV }\,.
\end{eqnarray*}

Now, the application of Corollary \ref{cor: parabola} leads to
\begin{eqnarray*}
\lim\limits_{\gamma\rightarrow\infty}\w^*&=& \w_{GMV}+\frac{\lim\limits_{\gamma\rightarrow\infty}X-\RGMV }{s}\boldsymbol{Q}{\bmu}\\
&=&\w_{GMV}+\frac{\RGMV ^2+s\VGMV -\RGMV ^2}{s\RGMV }\boldsymbol{Q}{\bmu}\\
&=&\w_{GMV}+\frac{\VGMV }{\RGMV }\boldsymbol{Q}{\bmu}
=\frac{\bSigma\inv{\bmu}}{\1\T \bSigma \inv{\bmu}}.
\end{eqnarray*}

\end{proof}

\begin{proof}[Proof of Corollary \ref{cor: Sharp greater}]
Using \eqref{eq:Y} and \eqref{eq:X_final}, the variance of the optimal portfolio that maximizes the expected power utility is given by
\begin{eqnarray*}
V &=& Y-X^2\\
&=& \frac{\gamma}{s}\left(X\RGMV -R^2_{GMV}-s\VGMV \right)\\
&-& \frac{\gamma+2}{s+1}X\RGMV +\frac{\gamma+1}{s+1}\left(R^2_{GMV}+s\VGMV \right) \\
&=& \frac{\gamma-2s}{s(1+s)}X\RGMV  -\frac{\gamma-s}{s(1+s)}\left(R^2_{GMV}+s\VGMV \right),
\end{eqnarray*}
which is non-negative if and only if
\begin{eqnarray*}
X&\ge& \frac{\gamma-s}{(\gamma-2s)\RGMV } \left(R^2_{GMV}+s\VGMV \right)\\
&>&\frac{1}{\RGMV } \left(R^2_{GMV}+s\VGMV \right)=\frac{\bmu\T \bSigma\inv\bmu}{\1\T \bSigma\inv\bmu}.
\end{eqnarray*}
\end{proof}

\begin{proof}[Proof of Corollary \ref{cor: Decrease of variance with the increase of gamma}]
	Using \eqref{eq: Mean-Variance parabola} the variance of the optimal portfolio that maximizes the expected power utility is given by
	
\begin{equation*}
V=\frac{1}{s}\left(X-\RGMV \right)^2+\VGMV .
\end{equation*}
	Next we show that a partial derivative of $X$ with respect to $\gamma$ is negative what brings the decrease of $X$ and $V$ by $\gamma$.
	
\begin{equation*}
\begin{split}
	&\frac{\partial X}{\partial\gamma}=\frac{1}{2(1+s)}\left(\RGMV -\frac{2(\gamma+2)\RGMV ^2-4(1+s)\left(\RGMV ^2+s\VGMV \right)}{2\sqrt{\D}}\right)\\
	&=\frac{4\D\RGMV ^2-\left[2(\gamma+2)\RGMV ^2-4(1+s)\left(\RGMV ^2+s\VGMV \right)\right]^2}{4(1+s)\sqrt{\D}\left[2\sqrt{\D}\RGMV +2(\gamma+2)\RGMV ^2-4(1+s)\left(\RGMV ^2+s\VGMV \right)\right]}\\
	&=\frac{16(1+s)\left(\RGMV ^2+s\VGMV \right)\RGMV ^2-16(1+s)^2\left(\RGMV ^2+s\VGMV \right)^2}{4(1+s)\sqrt{\D}\left[2\sqrt{\D}\RGMV +2(\gamma+2)\RGMV ^2-4(1+s)\left(\RGMV ^2+s\VGMV \right)\right]}\\
	&=\frac{4\left(\RGMV ^2+s\VGMV \right)\left[\RGMV ^2-(1+s)\left(\RGMV ^2+s\VGMV \right)\right]}{\sqrt{\D}\left[2\sqrt{\D}\RGMV +2(\gamma+2)\RGMV ^2-4(1+s)\left(\RGMV ^2+s\VGMV \right)\right]}<0.	
\end{split}
\end{equation*}	
	 
\end{proof}

\begin{proof}[Proof of Theorem \ref{th: optimal portfolio gamma=1}]
For the logarithmic utility function it holds with Lemma \ref{lem1} that
\begin{equation}\label{eq:normal_expectation_cor}
		E\left[U(W)\right]=\ln W_0+2\ln \w\T {\bmu} -\frac{1}{2}\ln\left(\w\T \bSigma \w+(\w\T {\bmu} )^2\right).
\end{equation}

In order to maximize the expected logarithmic utility \eqref{eq:normal_expectation_cor} under the constrain $\w\T \1=1$, the method of Lagrange multipliers is used with the Lagrange function given by
\begin{equation}\label{eq:maximization_problem_cor}
		{\ln W_0+}2\ln \w\T {\bmu} -\frac{1}{2}\ln\left(\w\T \bSigma \w+(\w\T {\bmu} )^2\right) + \lambda \,  (\w\T \1-1) .
\end{equation}		
		
Partial derivation yields
		\begin{align}\label{eq: lagrange derivetive_cor}
		&\frac{\partial}{\partial \w}L=2\frac{{\bmu}} {\w\T {\bmu} }-\frac{\bSigma \w+(\w\T {\bmu} ){\bmu} }{\w\T \bSigma \w+(\w\T {\bmu} )^2}+\lambda\1=\mathbf{0} ,
		\\
		&\frac{\partial}{\partial \lambda}L=\w \T \1-1=0	.
		\end{align}

The multiplication of \eqref{eq: lagrange derivetive_cor} by $\w\T $ leads to $\lambda = -1$. {Using the definition of $X$ and $Y$ from \eqref{XY} and multiplying \eqref{eq: lagrange derivetive_cor} by ${\bmu} \T \bSigma \inv/{\1\T \bSigma \inv\1}$ and $\1\T \bSigma \inv/{\1\T \bSigma \inv\1}$, we get
\begin{eqnarray}\label{eq:system_of_unknowns_cor}
	&2\frac{\RGMV ^2+s \VGMV } {X} -\frac{X(\VGMV +\RGMV ^2+s \VGMV )}{Y}-\RGMV =0 , \\
	&2\frac{\RGMV }{X}-\frac{\VGMV + X \RGMV }{Y}-1=0 .\label{eq:system_of_unknowns2_cor}	
\end{eqnarray}

The system of equations \eqref{eq:system_of_unknowns_cor} and \eqref{eq:system_of_unknowns2_cor} is a partial case of \eqref{eq:system_of_unknowns} and \eqref{eq:system_of_unknowns2} given in the proof of Theorem \ref{th: optimal portfolio weights theorem} which corresponds to $\gamma=1$. Hence, its solution is given by
\begin{equation}\label{eq:x_1,2y1,2_cor}
	X_{\pm}=\frac{3\RGMV  \pm \sqrt{\D}}{2(1+s)} \quad \text{and} \quad Y_{\pm}=\frac{X_{\pm} \RGMV  -\RGMV ^2}{s} -\VGMV ,
	\end{equation}
	where
	\begin{equation*}
	\D=9\RGMV ^2-8(1+s)(\RGMV ^2+s\VGMV ).
\end{equation*}

Finally, for $\D > 0$ or, equivalently, for $\gamma_{min}\leq1$  we get that
\begin{eqnarray*}
L(X_-, Y_-) - L(X_+, Y_+)&=&2\ln X_{-} -\frac{1}{2}\ln Y_{-}-2\ln X_{+} +\frac{1}{2}\ln Y_{+}\\
&=&2\ln \frac{X_{-}}{X_{+}}+\frac{1}{2}\ln \frac{Y_{+}}{Y_{-}}>0.
\end{eqnarray*}
Hence, $(X_{-},Y_{-})$ maximizes the expected logarithmic utility.
}
\end{proof}

{
\begin{proof}[Proof of Theorem \ref{th: log_utility_mv-efficiency}]
The proof of Theorem \ref{th: log_utility_mv-efficiency} follows from the proof of Theorem \ref{th: power_utility_mv-efficiency} with $\gamma=1$.
\end{proof}
}
{\small
\bibliography{mybib_single_period}{}

\begin{thebibliography}{}

\bibitem[\protect\astroncite{Bachelier}{1900}]{bachelier1900theorie}
Bachelier, L. (1900).
\newblock {\em Th{\'e}orie de la sp{\'e}culation}.
\newblock Gauthier-Villars.

\bibitem[\protect\astroncite{Barberis}{2000}]{barberis2000investing}
Barberis, N. (2000).
\newblock Investing for the long run when returns are predictable.
\newblock {\em The Journal of Finance}, 55(1):225--264.

\bibitem[\protect\astroncite{Bodnar et~al.}{2013}]{bodnar2013equivalence}
Bodnar, T., Parolya, N., and Schmid, W. (2013).
\newblock On the equivalence of quadratic optimization problems commonly used
  in portfolio theory.
\newblock {\em European Journal of Operational Research}, 229(3):637--644.

\bibitem[\protect\astroncite{Bodnar et~al.}{2015a}]{bodnar2015closed}
Bodnar, T., Parolya, N., and Schmid, W. (2015a).
\newblock A closed-form solution of the multi-period portfolio choice problem
  for a quadratic utility function.
\newblock {\em Annals of Operations Research}, 229(1):121--158.

\bibitem[\protect\astroncite{Bodnar et~al.}{2015b}]{bodnar2015exact}
Bodnar, T., Parolya, N., and Schmid, W. (2015b).
\newblock On the exact solution of the multi-period portfolio choice problem
  for an exponential utility under return predictability.
\newblock {\em European Journal of Operational Research}, 246(2):528--542.

\bibitem[\protect\astroncite{Bodnar and Schmid}{2009}]{bodnar2009econometrical}
Bodnar, T. and Schmid, W. (2009).
\newblock Econometrical analysis of the sample efficient frontier.
\newblock {\em The European journal of finance}, 15(3):317--335.

\bibitem[\protect\astroncite{Brandt}{2009}]{brandt2009portfolio}
Brandt, M. (2009).
\newblock Portfolio choice problems.
\newblock {\em Handbook of financial econometrics}, 1:269--336.

\bibitem[\protect\astroncite{Brandt et~al.}{2005}]{brandt2005simulation}
Brandt, M.~W., Goyal, A., Santa-Clara, P., and Stroud, J.~R. (2005).
\newblock A simulation approach to dynamic portfolio choice with an application
  to learning about return predictability.
\newblock {\em The Review of Financial Studies}, 18(3):831--873.

\bibitem[\protect\astroncite{Brandt and Santa-Clara}{2006}]{brandt2006dynamic}
Brandt, M.~W. and Santa-Clara, P. (2006).
\newblock Dynamic portfolio selection by augmenting the asset space.
\newblock {\em The Journal of Finance}, 61(5):2187--2217.

\bibitem[\protect\astroncite{Campbell and
  Viceira}{2002}]{campbell2002strategic}
Campbell, J.~Y. and Viceira, L.~M. (2002).
\newblock {\em Strategic asset allocation: portfolio choice for long-term
  investors}.
\newblock Clarendon Lectures in Economic.

\bibitem[\protect\astroncite{Chen et~al.}{2011}]{chen2011modeling}
Chen, A., Pelsser, A., and Vellekoop, M. (2011).
\newblock Modeling non-monotone risk aversion using sahara utility functions.
\newblock {\em Journal of Economic Theory}, 146(5):2075--2092.

\bibitem[\protect\astroncite{Cover}{1991}]{cover1991universal}
Cover, T.~M. (1991).
\newblock Universal portfolios.
\newblock {\em Mathematical finance}, 1(1):1--29.

\bibitem[\protect\astroncite{Dybvig and Ingersoll}{1982}]{dybvig1982mean}
Dybvig, P.~H. and Ingersoll, J.~E. (1982).
\newblock Mean-variance theory in complete markets.
\newblock {\em Journal of Business}, pages 233--251.

\bibitem[\protect\astroncite{Elton and Gruber}{1974}]{elton1974portfolio}
Elton, E.~J. and Gruber, M.~J. (1974).
\newblock Portfolio theory when investment relatives are lognormally
  distributed.
\newblock {\em The Journal of Finance}, 29(4):1265--1273.

\bibitem[\protect\astroncite{Elton et~al.}{2009}]{elton2009modern}
Elton, E.~J., Gruber, M.~J., Brown, S.~J., and Goetzmann, W.~N. (2009).
\newblock {\em Modern portfolio theory and investment analysis}.
\newblock John Wiley \& Sons.

\bibitem[\protect\astroncite{Fama}{1965}]{fama1965portfolio}
Fama, E.~F. (1965).
\newblock Portfolio analysis in a stable paretian market.
\newblock {\em Management science}, 11(3):404--419.

\bibitem[\protect\astroncite{Grauer}{1986}]{grauer_1986}
Grauer, R.~R. (1986).
\newblock Normality, solvency, and portfolio choice.
\newblock {\em Journal of Financial and Quantitative Analysis}, 21(3):265--278.

\bibitem[\protect\astroncite{Grauer and Hakansson}{1987}]{grauer1987gains}
Grauer, R.~R. and Hakansson, N.~H. (1987).
\newblock Gains from international diversification: 1968--85 returns on
  portfolios of stocks and bonds.
\newblock {\em The Journal of Finance}, 42(3):721--739.

\bibitem[\protect\astroncite{Hakansson}{1971}]{hakansson1971multi}
Hakansson, N.~H. (1971).
\newblock Multi-period mean-variance analysis: Toward a general theory of
  portfolio choice.
\newblock {\em The Journal of Finance}, 26(4):857--884.

\bibitem[\protect\astroncite{Jondeau et~al.}{2007}]{jondeau2007financial}
Jondeau, E., Poon, S.-H., and Rockinger, M. (2007).
\newblock {\em Financial modeling under non-Gaussian distributions}.
\newblock Springer Science \& Business Media.

\bibitem[\protect\astroncite{Levy and Markowitz}{1979}]{levy1979approximating}
Levy, H. and Markowitz, H.~M. (1979).
\newblock Approximating expected utility by a function of mean and variance.
\newblock {\em The American Economic Review}, 69(3):308--317.

\bibitem[\protect\astroncite{Limpert et~al.}{2001}]{LimpertStahel2001}
Limpert, E., Stahel, W.~A., and Abbt, M. (2001).
\newblock {L}og-normal {D}istributions across the {S}ciences: {K}eys and
  {C}lues.
\newblock {\em BioScience}, 51(5):341--352.

\bibitem[\protect\astroncite{Lynch and Tan}{2010}]{lynchtan2010}
Lynch, A.~W. and Tan, S. (2010).
\newblock Multiple risky assets, transaction costs, and return predictability:
  Allocation rules and implications for u.s. investors.
\newblock {\em Journal of Financial and Quantitative Analysis},
  45(4):1015--1053.

\bibitem[\protect\astroncite{Markowitz}{1952}]{markowitz1952portfolio}
Markowitz, H. (1952).
\newblock Portfolio selection.
\newblock {\em The journal of finance}, 7(1):77--91.

\bibitem[\protect\astroncite{Markowitz}{2014}]{markowitz2014mean}
Markowitz, H. (2014).
\newblock Mean--variance approximations to expected utility.
\newblock {\em European Journal of Operational Research}, 234(2):346--355.

\bibitem[\protect\astroncite{McDonald}{1996}]{mcdonald1996}
McDonald, J.~B. (1996).
\newblock {P}robability distributions for financial models.
\newblock In {\em Statistical Methods in Finance}, volume~14 of {\em Handbook
  of Statistics}, pages 427 -- 461. Elsevier.

\bibitem[\protect\astroncite{Merton and Samuelson}{1974}]{merton1974fallacy}
Merton, R.~C. and Samuelson, P.~A. (1974).
\newblock Fallacy of the log-normal approximation to optimal portfolio
  decision-making over many periods.
\newblock {\em Journal of Financial Economics}, 1(1):67--94.

\bibitem[\protect\astroncite{Mittnik and Rachev}{1993}]{mittnik1993modeling}
Mittnik, S. and Rachev, S.~T. (1993).
\newblock Modeling asset returns with alternative stable distributions.
\newblock {\em Econometric reviews}, 12(3):261--330.

\bibitem[\protect\astroncite{Pennacchi}{2008}]{pennacchi2008theory}
Pennacchi, G.~G. (2008).
\newblock {\em Theory of asset pricing}.
\newblock Pearson/Addison-Wesley Boston.

\end{thebibliography}
}
\end{document}